\newtheorem*{Lemma}{Lemma}
\newcommand{\be}{\begin{equation}}
\newcommand{\ee}{\end{equation}}
\newcommand{\ba}{\begin{eqnarray}}
\newcommand{\ea}{\end{eqnarray}}
\title{{\sf Hamiltonian Renormalisation IV.\\}
{\sf Renormalisation Flow of D+1 dimensional free scalar fields}\\
{\sf and Rotation Invariance}}
\author{
{\sf T. Lang}$^1$\thanks{{\sf 
thorsten.lang@gravity.fau.de}},
{\sf K. Liegener}$^1$\thanks{{\sf 
klaus.liegener@gravity.fau.de}},
{\sf T. Thiemann}$^1$\thanks{{\sf 
thomas.thiemann@gravity.fau.de}}\\
\\
{\sf $^1$ Inst. for Quantum Gravity, FAU Erlangen -- N\"urnberg,}\\
{\sf Staudtstr. 7, 91058 Erlangen, Germany}\\
}
\date{{\small\sf \today}}
\begin{document}

\maketitle
{\sf
\begin{abstract}
\fontfamily{lmss}\selectfont{ In this article we extend the test of Hamiltonian Renormalisation proposed in this series of articles to the $D$-dimensional case using a massive free scalar field. The concepts we introduce are explicitly computed for the $D=2$ case but transfer immediately to higher dimensions. In this article we define and verify a criterion that monitors, at finite resolution defined by a cubic lattice, 
whether the flow approaches a  {\it rotationally invariant} fixed point.}
\end{abstract}

\newpage
\tableofcontents

\newpage

~\\
{\bf Notation:}\\
\\
In this paper we will deal with quantum fields in the presence of an
infrared cut-off $R$ and with smearing functions of finite time support
in $[-T,T]$. The spatial ultraviolet cut-off is denoted by $M$ and has the
the interpretation of the number of lattice vertices in each spatial
direction. We will mostly not be interested in an analogous temporal
ultraviolet cut-off $N$ but sometimes refer to it for comparison
with other approaches. These quantities allow us to define dimensionful
cut-offs $\epsilon_{RM}=\frac{R}{M},\; \delta_{TN}=\frac{T}{N}$. In
Fourier space we define analogously
$k_R=\frac{2\pi}{R},\; k_M=\frac{2\pi}{M},\;
k_T=\frac{2\pi}{T},\; k_N=\frac{2\pi}{N}$.

We will deal with both instantaneous fields, smearing functions and Weyl
elements as well as corresponding temporally dependent objects.
The instantaneous objects are denoted by lower case letters
$\phi_{RM}, \; f_{RM},\; w_{RM}[f_{RM}]$, the temporally dependent ones by
upper case ones
$\Phi_{RM}, \; F_{RM},\; W_{RM}[F_{RM}]$. As we will see, smearing functions
$F_{RM}$ with
compact and discrete (sharp) time support will play a more fundamental role
for our purposes than those with a smoother dependence.

Osterwalder-Schrader reconstruction concerns the interplay between
time translation invariant, time reflection invariant and reflection positive
measures (OS measures)
$\mu_{RM}$ on the space of history fields $\Phi_{RM}$ and their corresponding
Osterwalder-Schrader (OS) data ${\cal H}_RM, \Omega_{RM}, H_{RM}$ where
${\cal H}_{RM}$ is a Hilbert space with cyclic (vacuum) vector $\Omega_{RM}$
annihilated by a self-adjoint Hamiltonian $H_{RM}$. Together, the vector
$\Omega_{RM}$ and the scalar product $<.,.>_{{\cal H}_{RM}}$ define
a measure $\nu_{RM}$ on the space of instantaneous fields $\phi_{RM}$.

Renormalisation consists in defining a flow or sequence
$n\to \mu^{(n)}_{RM},\; n\in \mathbb{N}_0$ for all
$M$ of families of measures $\{\mu^{(n)}_{RM}\}_{M\in \mathbb{N}}$.
The flow will be defined in terms of a coarse graining or embedding
map $I_{RM\to M'},\; M<M'$ acting on the smearing functions and satisfying
certain properties that will grant that 1. the resulting fixed point family
of measures, if it exists, is cylindrically consistent and 2. the flow
stays within the class of OS measures. Fixed point quantities are denoted
by an upper case $^\ast$, e.g. $\mu^\ast_{RM}$.

\newpage

\section{Introduction}
\label{s1}
\numberwithin{equation}{section}

To construct an example of an interacting Quantum Field Theory (QFT) on Minkowski space
satisfying the Wightman 
axioms remains a major challenge of fundamental physics. While a lot of progress has been made in the constructive programme in lower dimensions (see e.g. \cite{GJ87,Fr78,Riv00,Sim74}), still  no mathematically well-defined interacting QFT in $D+1=4$ dimensions has been found to date.  
The importance of this problem can be measured by the fact that the Clay Mathematical Institute\footnote{http://www.claymath.org/millennium-problems/yang-mills-and-mass-gap} devoted one of its millennium prizes to this research field. Due to Haag's theorem, which roughly says that 
the interacting theory cannot be defined on the same Hilbert space as the free theory, the only chance to solve this problem is to use a non-perturbative approach. One of these 
is the well-established  Lattice Quantumchromodynamics (LQCD) approach. Here one uses 
a spacetime lattice with spacing $\epsilon$ as a UV regulator that label a whole family 
of theories labelled by $\epsilon$ which are supposed to describe the theory at resolution $\epsilon$. The naive driscretisations of actions or Hamiltonians that are motivated by the classical 
theory do not define a consistent  family of theories which must be such that the measurements 
of observables at scale $\epsilon$ must give identical results no matter which theory 
of scale $\epsilon'<\epsilon$ is used. To construct such a consistent family of theories, which 
in the constructive setting is defined by a family of measures $\epsilon\mapsto \mu_\epsilon$,
one uses the idea of the renormalisation group (RG)
\cite{WK73,BW74,Wil75,Kad77,Cre83}.  
Renormalisation now consists in constructing a sequence $\mu^{(n)}_{\epsilon}$ of measure families where one obtains $\mu^{(n+1)}_{\epsilon}$ from $\mu^{(n)}_{\epsilon/2}$ by integrating out the degrees of freedom at scale $\epsilon/2$ that do not contribute to scale $\epsilon$
and where $\mu^{(0)}_\epsilon$ corresponds to the initial, naive discretisation.  If the sequence converges or at least has a fixed point (accumulation point) family $\mu^*_{\epsilon}$ then by construction that family is consistent, and chances are that it qualifies as the set of cylindrical projections 
of a corresponding physical continuum measure $\mu^*$. A lot of properties  of these fixed point theories or {\it perfect lattice measure families} $\mu^*_{\epsilon}$ haven been investigated \cite{Has98,Has08,HN93}, among them how a family of theories labelled by discrete cubic lattices
can still encode properties of the continuum such as {\it Euclidian invariance}.

In order to attack these problems from a new angle in our companion paper 
\cite{LLT1} a Hamiltonian Renormalisation formalism has been introduced. The motivation comes from the observation 
that it is much easier to compute the matrix elements of a Hamiltonian operator $H$ than 
of its contraction semigroup (Gibbs exponential $e^{-\beta H}$ at inverse temperature $\beta$) 
which is defined by a measure $\mu$ by Osterwalder-Schrader (OS) reconstruction \cite{OS72} 
provided that $\mu$ is reflection positive, among other properties. The original idea was therefore
to monitor the Wilsonian renormalisation flow of measures $n\mapsto \mu^{(n)}_\epsilon$ 
sketched above
in terms of its corresponding Osterwalder-Schrader data (OS data) of triples
 $(\mathcal{H}_{\epsilon}^{(n)},H^{(n)}_{\epsilon},\Omega^{(n)}_{\epsilon})$ consisting of a Hilbert space, a self-adjoint and positive semi-definite Hamiltonian operator thereon and a vacuum vector
which is a zero eigenvector for the Hamiltonian. Given such a renormalisation flow of OS data 
the plan was to extract a direct renormalisation flow that relates the OS data at scale $\epsilon$ 
to those at scale $\epsilon/2$ without recourse to the measure. It turns out that this idea fails 
in the sense that this measure derived flow of OS data makes it necessary to go back and forth between the OS data 
and OS measure so that one does need the matrix elements of the contraction semigroup which 
we wanted to avoid. However, the measure derived flow suggests a different, direct 
Hamiltonian  flow which does avoid the recourse to the measure.
 
In our companion paper \cite{LLT2} we checked that the proposed direct Hamiltonian flow,
while very different from the measure derived one, still defines the same continuum OS data 
as the continuum OS measure at the respective fixed points, at least for the two-dimensional,
massive Klein-Gordon model. In other words, the measure flow has a continuum fixed point 
measure $\mu^\ast$ whose OS data $({\cal H}^\ast, H^\ast, \Omega^\ast)$ coincide with the 
the continuum fixed point of the direct Hamiltonian flow which we take as an encouraging sign.
In fact, the OS data of the cylindrical projections $\mu^\ast_\epsilon$ of $\mu^\ast$ 
have OS data which have nothing to do with the family member $({\cal H}^{\ast}_\epsilon, 
H^\ast_\epsilon, \Omega^\ast_\epsilon)$ of the fixed point family of the direct Hamiltonian flow.
The latter are the physically relevant quantities since there are natural isometric 
injections $J^\ast_\epsilon {\cal H}^\ast_\epsilon\to {\cal H}^{\ast}$ which are the result of an 
inductive limit construction of Hilbert spaces such that 
$H^\ast_\epsilon=(J^\ast_\epsilon)^\dagger$ and 
$J^\ast_\epsilon \Omega^\ast_\epsilon=\Omega^\ast$. Further properties of this direct 
Hamiltonian flow touching subjects such as stability, criticality and universality are examined in 
our companion paper \cite{LLT3}.       

In this paper we are going extend \cite{LLT2,LLT3} by removing the restriction to two dimensions 
and considering the massive Klein Gordon model in all dimensions. This enables us to 
ask and answer the question, how  the finite resolution OS data, while defined on a 
cubic lattice labelled by $\epsilon$, still reveal properties that the continuum theory does have,
such as the spatial and rotational invariance of both the Hamiltonian and the vacuum. \\
\\
The architecture of the article is as follows:\\
\\
  
In section \ref{Recap} we briefly review elements of the formalism developed in 
 \cite{LLT1,LLT2, LLT3}.
 
In section \ref{Hamiltonian renormalisation} we perform the direct Hamiltonian 
renormalisation for a massive free scalar field in $D+1=3$ dimensions, by breaking it down into several independent renormalisation steps for each direction. It will transpire that this splitting is doable also for more dimensions and that each dimension can be treated 
independently due to a certain factorisation property. We can explicitly compute
the family of fixed point covariances of the Hilbert space measure family  
and find that it perfectly matches, as in the 1+1 case, the perfect Hilbert measure family 
that one obtains by the cylindrical projections of the known continuum Hilbert space measure.
The Hilbert space measure contains the same information as the Hilbert space together with 
its vacuum vector.
By the argument already exploited in the 1+1 dimensional case, the agreement of the fixed point
continuum Hamiltonian with the known continuum Hamiltonian then immediately follows.
 
 In section \ref{Renormalisation with changed RG-map} we investigate the consequences of 
 modifying the {\it coarse graining map} $I_{M \rightarrow2M}$ that we used 
 so far in this series of papers and which relates lattices with $M^D$ vertices to those 
 with $(2M)^D$ many. It was already shown in \cite{LLT2} for the 1+1 dimensional case 
 that not every choice of coarse graining map employed in the literature passes our criterion of being an 
 allowed cylindrically consistent coarse graining map. We find that at least when 
 modifying it to $I_{M\rightarrow M'}$ with $M'=2M,3M,5M,...$ we still
 obtain cylindrically consistent coarse graining
 maps and, moreover, they all lead to the same fixed point Hilbert space measure and Hamiltonian.
 Moreover, it is possible to mix and concatenate different blocking kernels for different
 directions leading to more general coarse graining maps such as those on hypercuboids rather
 than hypercubes and beyond. The fixed point structure is robust under such modifications, thus adding to the degree of universality of the model.  
  
In section \ref{Rotational Invariance} we present how the continuum concept of rotational invariance can be expressed as a condition on the finite resolution Hilbert space measures
$\nu^*_{\epsilon}$ of the fixed point family and and we will numerically demonstrate that in the case of the massive free scalar field the perfect lattice action seems to satisfy this condition.

In section \ref{Conclusion} we summarize and give an outlook to further research directions.

\section{Review of direct Hamiltonian Renormalisation}
\label{Recap}
This section serves to recall the notation and elements of Hamiltonian 
renormalisation from \cite{LLT1, LLT2} to which the reader is referred to for all the details.\\
\\
We consider infinite dimensional conservative Hamiltonian systems on globally hyperbolic 
spacetimes of the form $\mathbb{R}\times \sigma$.  If  $\sigma$ is not compact
 one introduces an infrared (IR) cut-off $R$ for the spatial manifold $\sigma$ by restricting to test-functions which are defined on a compact submanifold, e.g. a torus $\sigma_R:=[0,R]^D$ if $\sigma=\mathbb{R}^D$. We will assume this cut-off $R$ to be implicit in all formulae 
 below but do not display it to keep them simple, see \cite{LLT1,LLT2} for the explicit
 appearance of $R$.

Moreover, an ultraviolet (UV) cut-off $\epsilon_M:=R/M$ is introduced by restricting the smearing functions $f_M$ to a finite spatial resolution. In other words, $f_M\in L_{M}$ is defined by its value on the vertices of a graph, which we choose here to be a cubic lattice, i.e. there are $M^D$ many vertices, labelled by $m\in\mathbb{Z}^D_M$, $\mathbb{Z}_M=\{0,1,...,M-1\}$. In this 
paper we consider a background dependent (scalar) QFT and thus we have access to a natural 
inner product defined by it. For more general theories this structure is not available
but the formalism does not rely on it. The scalar products for $f_M,g_M\in L_{M}$ and respectively for $f,g\in L=C^{\infty}([0,R])$ are defined by
\begin{align}
\langle f_M, g_M \rangle_M = \epsilon^D_M\sum_{m\in\mathbb{Z}^D_M} \bar{f}_M(m) g_M(m), \hspace{20pt}
\langle f,g \rangle =\int_{\sigma_R} d^Dx \bar{f}(x)g(x)
\end{align} 
where $\bar{f}$ denotes the complex conjugate of $f$. In other words, $L$ is the space of all compactly supported smooth functions such that for $f,g\in L$ the inner product $\langle f,g\rangle$ stays finite. The similar statement for finite sequences and the inner product $\langle .,.\rangle_M$ defines the Hilbert space $L_M$.\\

Given an $f_M:\mathbb{Z}^D_M\rightarrow \mathbb{R}$ we can embed it into the continuum by  an {\it injection map} $I_M$
\begin{equation} \label{injection map}
\begin{split}
I_M : \hspace{10pt} L_{M} &\rightarrow \hspace{10pt} L \\
  f_M &\mapsto (I_M f_M ) (x):=\sum_{m\in \mathbb{Z}^D_M}f_M(m)\chi_{m\epsilon_M}(x)
\end{split}
\end{equation}
with $\chi_{m\epsilon_M}(x):=\prod_{a=1}^D\chi_{[m^a\epsilon_M,(m^a+1)\epsilon_M)}(x)$ being the characteristic function over the displayed intervals. Note that indeed the coefficient $f_M(m)$ is the value of $I_M f_M$ at $x=m\epsilon_M$.

$L$ is much larger than the range of $I_M$. This allows us to define its corresponding left inverse: {\it the evaluation map} $E_M$ is found to be
\begin{equation} \label{evaluation map}
\begin{split}
E_M : \hspace{10pt} L &\rightarrow \hspace{10pt} L_{M} \\
  f &\mapsto (E_M f ) (m):=f(m\epsilon_M)
\end{split}
\end{equation}
and by definition obeys
\begin{align}\label{left-inverse}
E_M\circ I_M= \text{id}_{L_M}
\end{align}
where $\text{id}_{L_M}$ denotes the identity map on the space $L_M$. 
Given those maps we are now able to relate test functions and thus also observables from the continuum with their discrete counterpart, e.g. for a smeared scalar field one defines: 
\begin{align}
\phi_M[f_M]:=\langle f_M, \phi_M\rangle_M,\hspace{20pt}\phi_M(m):=(I^\dagger_M\phi)(m)=\int_{[0,R)^D}d^Dx\; \chi_{m\epsilon_M}(x)\phi(x)
\end{align}
Indeed, since the kernel of any map $C: L\rightarrow L$ in the continuum is given as
\begin{align}
\langle f, C g\rangle =: \int_{[0,R]^D}d^Dx\int_{[0,R]^D}d^Dy \;C(x,y)\bar{f}(x)g(y)
\end{align}
it follows
\begin{align}
\langle I_M f_M, C I_M g_M\rangle = \langle f_M, [I^{\dagger}_M C I_M]g_M\rangle_M=: \langle f_M, C_M g_M\rangle_M
\end{align}
which shows that
\begin{align}
C_M(m,m')=\epsilon^{-2D}_M\langle \chi_{m\epsilon_M},C \chi_{m'\epsilon_M}\rangle
\end{align}\\

The concatenation of evaluation and injection for different discretisations shall be called {\it coarse graining map} $I_{M\rightarrow M'}$ if $M<M'$:
\begin{align}
I_{M\rightarrow M'} = E_{M'} \circ I_M : L_{R,M} \rightarrow L_{R,M'}
\end{align}
as they correspond to viewing a function defined on the coarse lattice as a function on a finer lattice of which the former is not necessarily a sublattice although. In practice we will 
choose the set of $M$ such that it defines a partially ordered and directed index set.
The coarse graining map is a free choice of the renormalisation process whose flow it drives, and its viability can be tested only a posteriori if we found a fixed pointed theory which agrees with the measurements of the continuum. Hence proposals for such a map should be checked at least in simple toy-models before one can trust their predictions.

The coarse graining maps are now used to call a family of 
of measures $M\mapsto \nu_M$ on a suitable space of fields $\phi_M$ {\it cylindrically 
consistent} iff 
\begin{align}\label{cylindricalconsistency}
\nu_{M}(w_M[f_M])=\nu_{M'}(w_{M'}[I_{M\rightarrow M'}\circ f_M])
\end{align}
where $w_M$ is a Weyl element restricted to the configuration degrees of freedom, i.e. for a 
scalar field theory as in the present paper 
$w_M[f_M]=\exp(i\phi_M[f_M])$. The measure $\nu_M$  can be considered as the positive linear 
GNS functional on the Weyl $^\ast-$algebra generated by the $w_M[f_M]$ with GNS data 
$({\cal H}_M,\Omega_M)$, that is
\begin{align}
\nu_M(w_M[f_M])=\langle \Omega_M , w_M[f_M]\Omega_M\rangle_{{\cal H}_M}
\end{align}
In particular, the span of the $w_M[f_M]\Omega_M$ lies dense in $\mathcal{H}_M$ and 
we simplify the notation by refraining from displaying a possible GNS null space. 
Under suitable conditions \cite{Yam75} a cylindrically consistent family has a continuum measure $\nu$ as a projective limit
which is related to its family members $\nu_M$ by 
\begin{align} \label{projectivelimit}
\nu_M(w_M[f_M])=\nu(w[I_M f_M]) 
\end{align}
It is easy to see that (\ref{projectivelimit}) and (\ref{cylindricalconsistency}) are compatible
iff we constrain the maps $I_M, E_M$ by the condition for all $M<M'$
\begin{align}\label{cylconrel}
I_{M'}\circ I_{M\rightarrow M'} = I_M
\end{align}
This constraint which we also call {\it cylindrical consistency} means that injection into the continuum 
can be done independently of the lattice on which we consider the function to 
be defined, which is a physically plausible assumption. 

 In the language of the GNS data  $(\mathcal{H}_M,\Omega_M)$ cylindrical consistency means that 
 the maps 
 \begin{align}
J_{M\rightarrow M'} :\hspace{20pt} \mathcal{H}_M\hspace{20pt} & \rightarrow \hspace{20pt}\mathcal{H}_{M'}\\
w_M[f_M]\Omega_M & \mapsto w_{M'}[I_{M\rightarrow M'} f_M]\Omega_{M'}
\end{align}
define a family of {\it isometric} injections of Hilbert spaces. The continuum GNS data 
are then given by the corresponding inductive limit, i.e. the embedding of Hilbert spaces 
defined densely by $J_M w_M[f_M]\Omega_M=w[I_M f_M] \Omega$ which is isometric. 
Note that $J_{M'} J_{M\rightarrow M'}=J_M$. 
The GNS data are completed by a family of positive 
self-adjoint Hamiltonians $M\mapsto H_M$ defined 
densely on the $w_M[f_M]\Omega_M$ to the OS data $({\cal H}_M, \; \Omega_M,\; H_M)$.
We define a family of such Hamiltonians to be cylindrically consistent provided that
\begin{align}
J_{M\rightarrow M'}^\dagger H_{M'} J_{M\rightarrow M'}=H_M
\end{align}
It is important to note that this does {\it not} define an inductive system of operators which 
would be too strong to ask and thus 
does not grant the existence of a continuum Hamiltonian. However, it grants the existence 
of a continuum positive quadratic form densely defined by 
\begin{align}
J_M^\dagger H J_M =H_M
\end{align}
If the quadratic form can be shown to be closable, one can extend it to a positive self-adjoint 
operator.    

In practice, one starts with an {\it initial} family of OS data $({\cal H}^{(0)}_M,\;
\Omega^{(0)}_M,\; H^{(0)}_M)$ usually obtained by some {\it naive} discretisation 
of the classical Hamiltonian system and its corresponding quantisation. The corresponding
GNS data will generically not define a cylindrically consistent family of measures, i.e. 
the maps $J_{M\rightarrow M'}$ will fail to be isometric. Likewise, the family 
of Hamiltonians will generically fail to be cylindrically consistent. Hamiltonian 
renormalisation now consists in defining a sequence of an {\it improved} OS data 
family $n\mapsto 
 ({\cal H}^{(n)}_M,\; \Omega^{(n)}_M,\;H^{(n)}_M)$ defined inductively by 
 \begin{align} \label{improving}
J^{(n)}_{M\rightarrow M'} w_M[f_M] \Omega^{(n+1)}_M:=w_{M'}[I_{M\rightarrow M'} f_M]
\Omega^{(n)}_{M'},\;\;H^{(n+1)}_M:=J^{(n)}_{M\rightarrow M'} H^{(n)}_{M'} J^{(n)}_{M\rightarrow M'}
\end{align}
Note that $H^{(n)}_M \Omega^{(n)}_M=0$ for all $M,n$. 
If the corresponding flow (sequence) has a fixed point family $({\cal H}^\ast_M,\; \Omega^\ast_M,\;
H^\ast_M)$ then its internal cylindrical consistency is restored by construction.

\section{Hamiltonian renormalisation of the massive free quantum scalar field}\label{Hamiltonian renormalisation}

In \cite{LLT2} the Hamiltonian renormalisation prescription introduced above has been tested with a model whose continuum theory is known and hence presents a way to validate the method: the massive free quantum scalar field described by the action
\begin{align}
S:=\frac{1}{2\kappa}\int_{\mathbb{R}^{D+1}} dt d^Dx [\frac{1}{c}\dot{\phi}^2-c\phi\omega\phi]
\end{align}
with ($n=1,2,..$)
\begin{align}\label{contcovariance}
\omega^2=\omega^2(p,\Delta)=\frac{1}{p^{2(n-1)}}(-\Delta+p^2)^n
\end{align}
where $p=\frac{mc}{\hbar}$ is the inverse Compton length. Following \cite{LLT2} we will study here the Poincare invariant case with $n=1$, other models with $n\not=1$ can be studied 
with the methods developed for $n=1$ by contour integral techniques.

After performing the Legendre transform, introducing the IR cut-off and discretising the theory for various scales $M$ one considers the lattice Hamiltonian family ($\hbar=1$)
\begin{equation}\label{DiscretizedHamiltonian}
H_M:=\frac{c}{2}\sum_{m\in \mathbb{Z}^D_M} \left(\kappa\epsilon^D_M  \pi^2_M(m) +\frac{1}{\epsilon^D_M\kappa}\phi_M(m)(\omega^2_M\cdot\phi_M)(m)\right)
\end{equation}
with ($\pi:=\dot{\phi}/\kappa$)
\begin{align}
\phi_M (m):=\int_{[0,1]^D} d^Dx\; \chi_{m\epsilon_M}(x)\phi(x),\hspace{20pt}\pi_M(m):=(E_M \pi)(m)=\pi(m\epsilon_M)
\end{align}
and $\omega^2_M=\omega^2(p,\Delta_M)$ is to be understood in terms of $\Delta_M$ the naively discretized Laplacian, which reads e.g. in two dimensions:
\begin{equation}
(\Delta^{(0)}_M f_M)(m):=\frac{1}{\epsilon_M^2}\left(f_M(m+e_1)+f_M(m+e_2)+f_M(m-e_1)+f_M(m-e_2)-4f_M(m)\right)
\end{equation}
with $e_i$ being the unit vector in direction $i$. One can write down the explicit action of the coarse graining map for projecting a lattice on a finer version with twice as many lattice points:
\begin{equation}
(I_{M\rightarrow 2M} f_M)(m) =
\sum_{m'\in \mathbb{Z}^D_M}\chi_{m'\epsilon_{2M}}(m\epsilon)f_M(m')
= f_M(\lfloor\frac{m}{2}\rfloor)
\end{equation}
where $\lfloor x \rfloor$ denotes the component wise Gauss bracket. The cylindrical consistency condition (\ref{cylindricalconsistency}) demanded that the measures on both discretisation, $M$ and $2M$, agree. Being a free field theory, one can show that the measure can be written as a Gaussian measure described at the fixed point by a covariance $c^\ast_M$, 
thus (\ref{cylindricalconsistency}) reads explicitly
\begin{align}
e^{-\frac{1}{2}\langle I_{M\rightarrow 2M}f_M,c^*_{2M}I_{M\rightarrow2M}f_M\rangle_{2M}}= e^{-\frac{1}{2}\langle f_M,c^*_M f_M\rangle_M}
\end{align}
Thus by studying the flow defined by
\begin{align}\label{Covarianceflow}
c^{(n+1)}_M := I^{\dagger}_{M\rightarrow2M}c^{(n)}_{2M}I_{M\rightarrow2M}
\end{align}
we know that the existence of a fixed point $c^*_M$ describes a Gaussian measure family, which is equivalent to corresponding Hilbert spaces $\mathcal{H}_M^*$ with vacua $\Omega^*_M$ which are all annihilated by the correspondingly defined Hamiltonians $H^\ast_M$.

\subsection{Determination of the fixed point covariance}

The flow defined by (\ref{Covarianceflow}) may lead to various fixed points (or none at all) depending on the initial family $c^{(0)}_M$. Thus, the naive discretisation should be of such a form that it captures important features of the continuum theory. For example, we will demand the covariance to be translation invariant, which is a property of the discretised Laplacian and will remain true under each renormalisation step.\\
We begin by rewriting (\ref{DiscretizedHamiltonian}) in terms of discrete annihilation and creation operators
\begin{align}
a^{(0)}_M(m):=\frac{1}{\sqrt{2\hbar\kappa}}\left[\sqrt{{\omega^{(0)}_M}/{\epsilon^D_M}}\phi_M -i\kappa\sqrt{{\epsilon^D_M}/{\omega^{(0)}_M}}\pi_M(m)\right]
\end{align}
where 
\begin{align}
[\omega^{(0)}_M]^2:=p^2-\Delta^{(0)}_M
\end{align}
which after some standard algebra displays the Hilbert space measure as:
\begin{align}
\nu^{(0)}_{M}(w_M[f_M])=\nu_{M}\left(e^{i\langle f_M, \phi_{M}\rangle_{M}}\right)=\exp\left(-\frac{1}{4}\langle f_M,\frac{\hbar\kappa}{2}\omega_{M}^{-1}f_M\rangle_M\right)
\end{align}
Hence our starting covariance is given as:
\begin{align}
c^{(0)}_M=\frac{\hbar\kappa}{2}[\omega^{(0)}_M]^{-1}
\end{align}
Using the discrete Fourier transform ($k_M=\frac{2\pi}{M}$)
\begin{align}
f_M(m)=\sum_{l\in\mathbb{Z}^D_M}\hat{f}_M(l)e^{ik_Ml\cdot m},\hspace{20pt}\hat{f}_M(l):=M^{-D}\sum_{m\in\mathbb{Z}^D_M}f_M(m)e^{-ik_Mm\cdot l}
\end{align}
we diagonalise the discretised Laplacian appearing in $\omega^{(0)}_M$. Thus, the initial covariance family becomes in $D=2$ (dropping the factor
 $\frac{2}{\hbar\kappa}$ in what follows)
 \begin{align} 
\hat{c}^{(0)}_M(l)&=\frac{1}{\sqrt{-\frac{1}{\epsilon^2_M}(2\cos(k_Ml_1)+2\cos(k_Ml_2)-4)+p^2}}=\nonumber\\
&=\int_{\mathbb{R}}\; \frac{dk_0}{2\pi} 
\frac{\epsilon_M^2}{[k_0^2+p^2]\epsilon_M^2+(4 -2\cos(k_Ml_1)-2\cos(k_Ml_2))}\label{integrand}
\end{align}
with $l\in\mathbb{Z}^2_M$ and we used the residue theorem.
We rewrite the integrand of (\ref{integrand}) as ($t_i=k_Ml_i, q^2:=(k_0^2+p^2)\epsilon_M^2$)
\begin{equation} \label{startingpoint2D}
\hat{c}^{(0)}_M(k_0,l)=\frac{1}{2}\; \frac{\epsilon_M^2}{[q^2/4+(1-\cos(t_1))]-[-q^2/4-(1-\cos(t_2))]}
\end{equation}
Since $1+q^2/4> \cos(t), \forall p>0,t\in \mathbb{R}$ one deduces that the first of the 
square brackets in (\ref{startingpoint2D}) is always positive, the other one always negative. Consequently, they lie in different halfplanes of $\mathbb{C}$. This can be used to artificially write this as an integral in the complex plane, by inverting the residue theorem: Given $z_1,z_2\in\mathbb{C}2$ with $Re(z_1)>0, Re(z_2)<0$ and a curve $\gamma$ going along $i\mathbb{R}$ from $+i\infty$ to $-i\infty$ and closing in the right plane on a half circle with radius $R\rightarrow \infty$, we can write:
\begin{equation}\label{InvertResThm}
\oint_{\gamma} dz \frac{1}{(z-z_1)(z-z_2)}=2\pi i \frac{1}{z_1-z_2}
\end{equation}
since the integrand decays as $z^{-2}$ on the infinite half circle.
We have chosen the orientation of $\gamma$ counter clock wise. Note that this seemingly breaks the symmetry between $t_1$ and $t_2$. However, this is only an intermediate artefact of the free choice of $\gamma$ which will disappear at the end of the computation.
 
Substituting $z\rightarrow z/2$ the initial covariance can thus be written  
\begin{equation} \label{startingdecoupledcovariance}
\hat{c}^{(0)}_M(l)=-\oint_{\gamma} dz \frac{1}{8\pi i}\; 
\frac{\epsilon^2_M}{\epsilon^2_M(\frac{p^2+k_0^2}{2}-z)/2+1-\cos(t_1)}\;\;\frac{\epsilon^2_M}{\epsilon^2_M(\frac{p^2+k_0^2}{2}+z)/2+1-\cos(t_2)}
\end{equation}
In order to shorten our notation, we will introduce: $q^2_{1,2}(z):=
\epsilon^2_M([k_0^2+p^2]/2\mp z)$.
The starting point of our RG flow is now factorised into two factors which very closely resemble the 1+1 dimensional case. This is the promised factorising property.\\

 On the other hand, one can also show by explicit calculation (see appendix \ref{sa}) that both directions decouple in the renormalisation transformation (\ref{Covarianceflow}). Since the initial covariance factorises  under the contour integral over $\gamma$
this factorisation is preserved under the flow and implies that the flow of the covariance in each direction can be performed separately. 
At the end we then compute the resulting integral over $z$ along $\gamma$. In addition, the decoupling of the flow (\ref{Covarianceflow}) and the factorisation of 
the initial family of covariances (\ref{startingdecoupledcovariance}) for the naive discretisation
of the Laplacian are features that occur independently
of the dimension $D$. For the decoupling this follows immediately from the corresponding
generalisation of (\ref{flowdefinition}) as the sum over $\delta',\delta^{\prime\prime}$ is carried
out on the exponential function which contains both linearly in the exponent. For the factorisation
we note the following iterated integral identity for complex numbers $k_j,\; j=1,..,D$ with strictly 
positive real part  
\begin{align}
\frac{1}{k_1+..+k_D}=(2\pi i)^{D-1}
\oint_\gamma\; \frac{dz_1}{z_1-k_1}\;
\oint_\gamma\; \frac{dz_2}{z_2-k_2}\;..
\oint_\gamma\; \frac{dz_{D-1}}{z_{D-1}-k_{D-1}}\;\frac{1}{z_1+..+z_{D-1}+k_D}
\end{align}
in which $\gamma$ is always the same closed contour with counter clock orientation over the
half circle in the positive half plane followed by the integral over the imaginary axis. Because of that
the real part of each of the integration variables $z_j$ is non negative so that the last fraction has 
a denominator with strictly positive real part. Accordingly, the only pole of the integrand for the $z_j$ integral in the domain bounded by $\gamma$ is $k_j$ and the claim follows from the residue theorem. It transpires that the strategy illustrated for the case $D=2$ also solves the case of general $D$ and it therefore suffices to carry out the details for $D=2$.

The flow now acts on the integrand of the contour integral and we can do it for each $z$ 
separately.
The flow in each direction is thus described by exactly the same map as in the one dimensional case in \cite{LLT2}. We can therefore immediately copy the fixed point covariance from there.
We just have to keep track of the $z$ dependence. 
In direction $i=1,2$ the covariance can be parametrised by 
three functions of $q_i(z)$  ($t_i=k_Ml_i$, $l_i\in\mathbb{Z}_M$)
\begin{align}
\hat{c}_{M}^{(n)}(k_0,l_i,z)=\frac{\epsilon_M^2}{q_{i}^3(z)}\frac{b_n(q_{i}(z))+c_n(q_{i}(z))\cos(t_i)}{a_n(q_{i}(z)-\cos(t_i)}
\end{align}
The initial functions are 
\[a_0(q_{1,2}(z))=1+\frac{q^2_{1,2}}{2},\hspace{15pt} b_0(q_{1,2}(z))=\frac{q^3_{1,2}}{2},\hspace{15pt} c_0(q_{1,2}(z))=0\]
Before plugging in the fixed points, however, one has to check whether the flow will drive the starting values into a finite fixed point, i.e. all the numerical prefactors that are picked up in front of the covariance should cancel each other. Indeed, one RG steps corresponds to
\begin{align}
(2\pi i) \hat{c}^{(n+1)}_M(k_0,l)&=-\frac{1}{4}\oint_{\gamma} dz \left(\sum_{\delta_1=0,1}(1+\cos(k_{2M}(l_1+\delta_1M)))\hat{c}^{(n)}_{2M}(k_0,l_1,z)\right) \times\\
&\hspace{120pt}
\times
\left(\sum_{\delta_2=0,1}(1+\cos(k_{2M}(l_2+\delta_2M)))\hat{c}^{(n)}_{2M}(k_0,l_2,-z)\right)\nonumber
\end{align}
Note that $\epsilon_M\rightarrow \epsilon_{2M}=\epsilon_M/2$ whence 
\begin{equation} \label{transformqpm}
q^2_{1,2}:=\epsilon^2_M(\frac{p^2+k_0^2}{2}\mp z) \rightarrow 
\frac{1}{4}\epsilon^2_M(\frac{p^2+k_0^2}{2}\mp z) = q^2_{1,2}/4
\end{equation}
Collecting all powers of $2$,  we get 1. minus two from the $\epsilon^2_M$ in the numerator 
of the factor for both directions, that is altogether minus four; 2. the RG map gives an additional 
minus two because of the $1/4$ prefactor; and 3. due to (\ref{transformqpm})  
the factor $q_1^{-3}q_2^{-3}$ gives a power of plus six. Hence the overall power of two is zero.

Accordingly, we find  the same fixed points as in \cite{LLT2}:
\begin{align}
a^*(q_{1,2}) &=\text{ch}(q_{1,2})\\
b^*(q_{1,2}) &=q_{1,2} \text{ch}(q_{1,2})-\text{sh}(q_{1,2})\\
c^*(q_{1,2}) &=\text{sh}(q_{1,2})-q_{1,2}
\end{align}
where we write shorthand for the hyperbolic functions: $\text{ch}(q):=\cosh(q)$ and $\text{sh}(q):=\sinh(q)$. Thus we find with $t_j=k_M l_j$ 
\begin{align}\label{finalIntegralof2DRen}
\hat{c}^*_M(k_0,l)&=-\left(\frac{\epsilon_M^4}{2\pi i}\right)\; \oint_\gamma\; dz\;
\prod_{j=1,2} \frac{1}{q_j^3}
\frac{q_j \text{ch}(q_j) - \text{sh}(q_j) +(\text{sh}(q_j)-q_j)\cos(t_j)}{\text{ch}(q_j)-\cos(t_j)}
\end{align}
Note that it is not necessary to pick a square root of the complex parameter 
$q^2_{1,2}(z)=\epsilon_M^2(\frac{k_0^2+p^2}{2}\mp z)$ since the integrand only depends 
on the square, despite its appearance (in other words, one may pick the branch arbitrarily,
the integrand does not depend on it). It follows that the integrand is a single valued 
function of $z$
which is holomorphic everywhere except for simple poles which we now determine, and which 
allow to compute the contour integral over $\gamma$ using the residue theorem (see appendix \ref{sa} for further details) and end up with 
\begin{align}\label{CovarianceResult2D}
\hat{c}_M^*(k_0,l)=& \epsilon^2_M\frac{[q_N {\rm ch}(q_N)-{\rm sh}(q_N)]+[{\rm sh}(q_N)-q_N]\cos(t_2)}{q_N^3[{\rm ch}(q_N)-\cos(t_2)]}+\\
&\hspace{50pt}-2\epsilon^2_M\underset{N\in\mathbb{Z}}{\sum}\frac{\cos(t_1)-1}{(2\pi N +t_1)^2}
\frac{1}{q_N^3} \;\frac{q_N \text{ch}(q_N)-\text{sh}(q_N)+(\text{sh}(q_N)-q_N)\cos(t_2)}
{\text{ch}(q_N)-\cos(t_2)}\nonumber
\end{align}
The result has no manifest symmetry in $t_1\leftrightarrow t_2$ but from the derivation 
it is clear that it must be. Note that each term in the sum remains finite for $\epsilon\rightarrow 0$ as the individual parts contribute inverse powers of $\epsilon_M$:  $(\cos(t)-1)$ scales as $\mathcal{O}(\epsilon^2_M)$, since 
$t=k_R  \epsilon_M l$ depends linearly on $\epsilon_M$ as well as $q=\epsilon^2_M (p^2+k_0)^2$. Thus $(q^2+(t+2\pi N)^2)=\mathcal{O}(\epsilon^2_M)$ if $N=0$ or approaches a constant else.

\subsection{Consistency check with the continuum covariance}

The mere existence of a fixed point measure family described by the covariance (\ref{CovarianceResult2D}) of the flow induced by (\ref{Covarianceflow}) does not necessarily mean that it has any relation with the known continuum theory. We will thus invoke the consistency check also presented in \cite{LLT2}, which consists of looking at the cylindrical projection at resolution $M$ of the continuum covariance $c:=\frac{1}{2}\omega^{-1}$ in $D=2$. Using that the latter is given by (\ref{contcovariance}) we find its projection to be
\begin{align}
c_M(m,m')&=\epsilon_M^{-4} (I^\dagger_M\; c\; I_M)(m,m') =\\
&
=\epsilon^{-4}_M \int^{(m_1+1)\epsilon_M}_{m_1\epsilon_M}dx_1
\int^{(m_2+1)\epsilon_M}_{m_2\epsilon_M}dx_2\int^{(m'_1+1)\epsilon_M}_{m'_1\epsilon_M}dy_1\int^{(m'_2+1)\epsilon_M}_{m'_2\epsilon_M}dy_2 \hspace{5pt}\;c(x,y)\nonumber
\end{align}
see \cite{LLT2} for more details. Using that the $e_{n}(x):=\frac{1}{R} e^{ik_R n\cdot x},
\; k_R=2\pi/R$ 
define an orthonormal basis of $L_R=L_2([0,R]^2,d^2x)$ one finds the resolution of identity
\begin{equation}
\frac{1}{R^2}\sum_{n\in \mathbb{Z}^2}e^{ik_R (x-y)\cdot n}=\delta_R(x,y):=\delta_R(x_1,y_1)\delta_R(x_2,y_2)
\end{equation}
We use this to write the covariance as
\begin{align}
c(x,y)&= \frac{1}{2}\left(-\Delta_{Rx}+p^2\right)^{-1/2} \delta_R(x,y)=
\int \; \frac{dk_0}{2\pi} \left(-\Delta_{Rx}+k_0^2+p^2\right)^{-1} \delta_R(x,y)\\
 &=\int \frac{dk_0}{2\pi}
 \sum_{n\in\mathbb{Z}^2} e_{nR}(y)\left(-\Delta_{Rx}+p^2+k_0^2\right)^{-1}e_{nR}(x)=
 \frac{1}{R^2}\sum_{n\in\mathbb{Z}^2}e^{ik_R n\cdot (x-y)}\frac{1}{n^2 k_R^2+k_0^2+p^2}\nonumber
\end{align}
Now we can perform the integrals, e.g.
\begin{equation}
\int^{(m_1+1)\epsilon_M}_{m_1\epsilon_M}dx_1 e^{i(2\pi) n_1 x_1}=\frac{1}{ik_R n_1}\left(e^{ik_R n_1(m_1+1)\epsilon_M}-e^{i k_R n_1m_1\epsilon_M}\right)
\end{equation}
where the case $n_1=0$ is obtained using de l'Hospital. We find with $k_M=2\pi/M$
\begin{align}
c_M(m&,m')=\epsilon^{-4}_M \frac{1}{R^2}\int \frac{dk_0}{2\pi}\; 
\sum_{n\in\mathbb{Z}^2}\frac{1}{n^2 k_R^2+p^2+k_0^2}\left(\int^{(m_1+1)\epsilon_M}_{m_1\epsilon_M}dx_1 e^{i(2\pi)n_1x_1}\right)\times\\
&\hspace{20pt}\times\left(\int^{(m_2+1)\epsilon_M}_{m_2\epsilon_M}dx_2 e^{i(2\pi)n_2x_2}\right)\left(\int^{(m_1'+1)\epsilon_M}_{m_1'\epsilon_M}dy_1 e^{i(2\pi)n_1y_1}\right)\left(\int^{(m_2'+1)\epsilon_M}_{m_2'\epsilon_M}dy_2 e^{i(2\pi)n_2y_2}\right)\nonumber\\
&=R^{-2} \int\frac{dk_0}{2\pi} \sum_{n\in\mathbb{Z}^2}\frac{1}{n^2 k_R^2+p^2+k_0^2}\;e^{ik_Mn\cdot (m-m')}\;
\frac{4}{k_M^4 n_1^2 n_2^2}[1-\cos(k_M n_1)]\;[1-\cos(k_M n_2)]\nonumber
\end{align} 
We now wish to proceed exactly as in \cite{LLT2} and thus split the sum over $n_j=l_j+M N_j$ with 
$l_j\in \mathbb{Z}_M$ and $N\in \mathbb{Z}^2$
\begin{align}
c_M(m,m')&=R^{-2} \epsilon_M^4 \int\frac{dk_0}{2\pi} \sum_{m\in\mathbb{Z}_M^2}
e^{ik_Ml \cdot (m-m')}\sum_{N\in \mathbb{Z}^2}\times\;\\
& \;\;\;\;\times\frac{[1-\cos(k_M (l_1+M N_1)]
\;[1-\cos(k_M (l_2+M N_2)]}{(l+M N)^2 k_M^2+\epsilon_M^2(p^2+k_0^2)}\;
\;
\frac{4}{k_M^4 (l_1+M N_1)^2  (l_2+M N_2)^2}\nonumber
\end{align}
from which we read off the Fourier transform of $c_M(m)=R^{-2}\sum_{l\in \mathbb{Z}_M^2}
e^{k_M l\cdot m} \hat{c}_M(l)$
\begin{align}
\hat{c}_M(k_0,l)&=\epsilon_M^4
\sum_{N\in \mathbb{Z}^2}\;\times \\
&\;\;\;\;\times\frac{[1-\cos(k_M (l_1+M N_1)]
\;[1-\cos(k_M (l_2+M N_2)]}{(l+M N)^2 k_M^2+q^2}\;
\;
\frac{4}{k_M^4 (l_1+M N_1)^2  (l_2+M N_2)^2}\nonumber
\end{align}
Using the contour integral idea as in the previous section we obtain
\begin{align}
\hat{c}_M(k_0,l)=&-\frac{1}{2\pi i} \oint_\gamma\; dz\;
\prod_{j=1,2}\; 
\\
&\hspace{30pt}\big[\sum_{N_j \in \mathbb{Z}} \frac{\epsilon_M^2}{(l_j+M N_j)^2 k_M^2+q_j(z)^2}\;
\;
\frac{2}{k_M^2 (l_j+M N_j)^2}[1-\cos(k_M (l_j+M N_j))]\big]\nonumber
\end{align}
where $q_j(z)$ is the same as in the previous subsection. Now the two sums of the formula 
above are exactly the same that occurred in \cite{LLT2} with $q^2$ replaced by $q_j(z)^2$ and 
$t$ replaced by $t_j =l_j k_M$. Thus, we can copy the result from there and find 
\begin{align} \label{continuumcovariance}
\hat{c}_M(k_0,l)=- \frac{\epsilon_M^4}{2\pi i} \oint_\gamma\; dz\;
\prod_{j=1,2}\; \frac{1}{q_j^3}\frac{q_j(z) \text{ch}(q_j)-\text{sh}(q_j)
+[\text{sh}(q_j)-q_j]\cos(t_j)}{\text{ch}(q_j)-\cos(t_j)}
\end{align}
with $q_j\equiv q_j(z)$.
Comparing (\ref{continuumcovariance}) and (\ref{finalIntegralof2DRen}) we see that both agree,
thus the fixed point covariance family indeed coincides with the continuum covariance family.

\section{Fixed points of the free scalar field for changed RG-flows}\label{Renormalisation with changed RG-map}

The aim of this section is to change the block-spin-transformation we have used so far
and to check whether the fixed point is changed as well.
As has already been discussed in \cite{LLT3} not every coarse graining map fulfils the cylindrical consistency relation which induces a corresponding relation on the family of coarse grained 
measures. Note that coincidence of continuum measures with their cylindrical 
(finite resolution) projections can only be deduced if one uses the same blocking kernel 
(which defines those projections).
Thus, it a natural question to ask whether other maps of the kind $I_{M\rightarrow M'}$ apart from $M'=2M$ will also lead to physically relevant theories. Due to the cylindrical consistency property of $I_{M\rightarrow M'}$ it is apparent that this is the case for all $M'=2^nM$ for $n\in\mathbb{N}$. 
A natural extension would be to consider powers of any prime number. 
In this section we present how at least for the choice for $M'=3M$ and $M'=5M$ this indeed gives the same fixed point covariance and argue  that it should hold true for every choice of 
prime number. This would be useful because the set $\mathbb{N}$ is partially ordered and 
directed by $<$ but given $m_1,m_2\in \mathbb{N}$ we do not always find $m_3>m_1, m_2$ 
with $m_3=m_1 2^{n_1}=m_2 2^{n_2}$.

If one considers $I_{M\rightarrow u M}$ with $u\in\mathbb{P}$ a prime number
then the coarse graining map is given by
\begin{equation}
[I_{M\rightarrow uM}f_M](m)=f_M(\lfloor \frac{m}{u}\rfloor)
\end{equation}
where $\lfloor .\rfloor$ is the component wise Gauss bracket. This map is easily checked to 
be cylindrically consistent $I_{u^k M\to u^{k+l} M}\circ I_{M\to u^k M}=I_{M\to u^{k+l} M}$.
To see this, we note that $\lfloor m/u^k\rfloor =m'$ if $m=m' u^k+r,\; r=0, ... u^k-1$ so that
$\lfloor\lfloor m/u^l\rfloor /u^k\rfloor =m'$ for $\lfloor m/u^l\rfloor =m' u^k+r,\; k=0,..,u^k-1$ that is 
for $m=(m' u^k+r) u^l+s,\; s=0,..u^l-1$ i.e. $m= m' u^{k+l}+t,\; t=0,..,u^{k+l}-1$ i.e. 
$m'=\lfloor m/u^{k+l}\rfloor$.
 
We now use these maps on our Gaussian example. For their covariances 
this implies 
\begin{align}
&\langle f_M, C^{(n+1)}_M f_M\rangle = \epsilon^{2D}_M\sum_{m'_1,m'_2\in\mathbb{Z}^D_M} f_M(m'_1)f_M(m'_2)C^{(n+1)}_M(m'_1,m'_2)\\
&=\langle I_{M\rightarrow uM} f_M, C^{(n)}_{uM} I_{M\rightarrow uM}f_M\rangle 
=\frac{\epsilon^{2D}_{uM}}{u^{2D}}\sum_{m'_1,m'_2\in\mathbb{Z}^D_M}f_M(m'_1)f_M(m'_2)\sum_{\scriptsize \begin{array}{c}
\lfloor m_1/u\rfloor =m'_1,\\
\lfloor m_2/u\rfloor =m'_2
\end{array}
}C^{(n)}_{uM}(m_1,m_2)\nonumber
\end{align}
This allows to deduce by direct comparison:
\begin{align}
C^{(n+1)}_M(m'_1,m'_2)=u^{-2D}\sum_{\delta',\delta''\in\{0,1,..,u-1\}^D} C^{(n)}_{uM}(um'_1+\delta', um'_2+\delta'')
\end{align}
Again we employ translational invariance, i.e. $C^{(n)}_M(m_1,m_2)=C^{(n)}_M(m_1-m_2)$ and find for the Fourier transform: ($k_M=\frac{2\pi}{M}=uk_{uM}$)
\begin{align}
C^{(n+1)}_M(m'_1,m'_2)=
u^{-2D} \sum_{l'\in\mathbb{Z}^D_M}e^{ik_Ml'(m'_1-m'_2)}\sum_{\delta',\delta'',\delta\in\{0,1,...,u-1\}^D}\hat{C}^{(n)}_{3M}(l'+\delta M)e^{ik_{uM}(l'+\delta)\cdot(\delta'-\delta'')}
\end{align}
whence
\begin{equation}\label{generaldecoupling}
\hat{C}^{(n+1)}_M(l')=u^{-2D}\sum_{\delta\in\{0,1,...,u-1\}^D}\hat{C}_{uM}^{(n)}(l'+\delta M)\prod_{i=1}^D\frac{\sin(\frac{u}{2}k_{uM}(l'_i+\delta_i M))^2}{\sin(\frac{1}{2}k_{uM}(l'_i+\delta_i M))^2}
\end{equation}
where we have used that the exponentials decouple, and that the geometric series can be performed explicitly
\begin{align}\label{geometricseries}
\sum_{\delta,\delta'\in\{0,...,u-1\}}e^{ia(\delta-\delta')} = \frac{1-e^{iau}}{1-e^{ia}}\frac{1-e^{-iau}}{1-e^{-ia}}=\frac{\sin(\frac{u}{2}a)^2}{\sin(\frac{1}{2}a)^2}
\end{align}
Since (\ref{generaldecoupling}) states that the flow decouples in general and since 
we can write the initial covariance also in a decoupled form, this allows us to limit our further analysis to the $D=1$ case without loss of generality.\\

In appendix \ref{sb} the determination of the fixed point for the prime $u=3$ will be explicitly performed as it illustrates what needs to be done in the general case. The initial data of the RG-flow is given for $D=1$ with $t=k_Ml, q^2=\epsilon_M^2(k_0^2+p^2)$ by
\begin{equation}\label{3start}
\hat{c}^{(0)}_M(k_0,l)=\frac{\epsilon^2_M}{2(1-\cos(t))+q^2}
\end{equation}
With the help of trigonometric identities, one manages to write the $\hat{c}^{(n)}$ in the form
\begin{equation} \label{3start2}
\hat{c}^{(n)}_M(k_0,l)=\frac{\epsilon^2_M}{q^3}\frac{b_n(q)+c_n(q)\cos(t)}{a_n(q)-\cos(t)}
\end{equation}
with suitably chosen functions $a_n,b_n,c_n$ of $q$ as we already know is true for (\ref{3start}). As it will transpire in appendix \ref{sb}, one finds exactly the same fixed point under the $M\rightarrow3M$ coarse graining map as we found for the $M\to 2M$ 
coarse graining map!

We did the same calculations also for the prime $u=5$ which is considerably more work
but all steps are literally the same and also the fixed point is the same. For reasons of space, we do not display these calculations
here and leave it to the interested reader as an exercise. For the general prime we do 
not have a proof available yet but hope to be able to supply it in a future publication.
However, we do not expect any changes. In any case, for whatever primes the 
fixed point stays the same (it holds at least for $u=2,3,5$) the statement is also true 
for all dimensions due to the factorising property. This factorising property
also makes it possible to study 
in higher dimensions more complicated hypercuboid  like coarse graining block transformations 
rather than hypercube like ones.  In order to illustrate this, we give some details 
for the case $D=2$ dimensions of a rectangle blocking with $u_1=2$ for the first direction 
and  $u_2=3$ for the second. The map is consequently $I_{(M_1,M_2)\rightarrow (2M_1,3M_2)}
=I_{M_1\to 2M_1}\times I_{M_2\to 3 M_2}$.
The naively discretised Laplacian on a lattice with different spacings $\epsilon_{M_1},\epsilon_{M_2}$ is given as (here: $2\epsilon_{M_1}=3\epsilon_{M_2}$)
\begin{align}
&\left(\Delta_M f_M\right) (m):=\\
&=\frac{1}{\epsilon^2_{M_1}}\left(f_M(m+e_1)+f_M(m-e_1)-2f_M(m)
\right)+\frac{1}{\epsilon^2_{M_2}}\left(f_M(m+e_2)+f_M(m-e_2)-2f_M(m)\right)\nonumber
\end{align}
Hence the same strategy from (\ref{InvertResThm}) works again and gives us:
\begin{align}
&\hat{C}^{(0)}_M(k_0,l)=\left(-\frac{1}{\epsilon^2_{M_1}}(2\cos(k_{M_1}l_1)-2)-\frac{1}{\epsilon^2_{M_2}}(2\cos(k_{M_2}l_2)-2)+p^2+k_0^2\right)^{-1}\\
&=-\frac{1}{2^3\pi i}\oint dz \frac{\epsilon^2_{M_1}}{\epsilon^2_{M_1}(z+k_0^2+p^2/2)/2+1-\cos(k_{M_1}l_1)}\;\frac{-\epsilon^2_{M_2}}{\epsilon^2_{M_2}(-z+k_0^2+p^2/2)/2+1-\cos(k_{M_2}l_2)}\nonumber
\end{align}
So both directions decouple and yield, as already shown the same fixed point! It remains to compute the integral which is exactly the same as (\ref{finalIntegralof2DRen}).

A further immediate consequence is that at this fixed point, one could also consider the flow of arbitrary concatenations of different coarse-graining maps, independently for each direction, 
e.g. $.....I_{6M\rightarrow 12M}I_{2M\rightarrow 6M}I_{M\rightarrow2M}$ and we see 
that all of them have the same fixed point. We conclude that the fixed point is quite robust 
under rather drastic changes of the coarse graining map.

\section{Rotational Invariance of the lattice fixed pointed theory}\label{Rotational Invariance}

We now turn our attention towards the much discussed question of {\it rotational invariance} \cite{HN93,LR82,DS12}. By this we mean that most Hamiltonians for continuum theories 
on Minkowski space have $SO(D)$ as a symmetry group besides spatial translation
invariance. On the one hand, a fixed lattice certainly breaks rotational invariance and in the 
case of a hypercubic lattice reduces the invariance to rotations by multiples of $\pm\pi/2$ around the coordinate axes. On the other hand, it is clear that the cylindrical projections of a rotationally invariant measure in the continuum with respect to smearing functions adapted to the 
family of hypercubic lattices in question must carry an imprint of that continuum rotation
invariance. In other words, there must exist a criterion at finite lattice resolution, whether the 
corresponding lattice measure qualifies as the cylindrical projection of a continuum rotationally invariant measure.  

In this section  we identify such a notion of {\it rotational invariance} at finite resolution at least 
for the case of scalar field theories. We will consider the generalisation to other field
contents in future publications. We then successfully test this criterion for the fixed point covariance $\hat{c}^*_M(l)$ from section \ref{Hamiltonian renormalisation}  for the free Klein Gordon field. Due to the factorisation property and due to the possibility of presenting any rotation in terms a composition of rotations about the coordinate axes, we can reduce our attention to two spatial dimensions. This presents an example for how the Hamiltonian renormalisation scheme is able to detect the restoration of continuum properties of the classical theory which upon naive regularisation were lost in the quantisation process.

\subsection{The lattice rotational invariance condition}

Given the IR-restricted compact submanifold of $\sigma$, i.e. the $D$-dimensional torus $\sigma_R$ with periodic boundary conditions and length $R$, one must be precise 
what one means by rotations. Due to the periodicity, the definition of what is understood as a rotation may vary for points which have a distance to the centre of rotation greater then $R/2$. For the convenience of the reader, we will hence present the following description of what will be understood as a rotation in this paper:\\
In order to rotate the system around $x_0\in\sigma_R$ one uses the Euclidian metric on the torus to identify all points as a set $S_r$ which have distance $r>0$ to the central point $x_0$. We then choose $S_r$ in order to construct a representation of $\text{SO}(D)$ on it, e.g. in $D=2$ one has  $\Pi : \text{SO}(2) \mapsto \text{GL}(\sigma_R)$ with $\Pi(2\pi)= \text{id}$ and $\Pi(\alpha)\Pi(\beta)=\Pi(\alpha+\beta)$, where we label the elements of the one-dimensional $SO(2)$ by $\alpha,\beta\in [0,2\pi)$. Without loss of generality we will consider in the following $x_0=0$. Indeed, upon considering a chart in Cartesian coordinates that includes some complete $S_r$ with $r<R/2$ this means we can write the action of a rotation on one of those $S_r$ as a matrix ($x\in S_r$)
\begin{align}\label{rotation_linear}
\Pi(\alpha) \cdot x= \left(\begin{array}{cc}
\cos(\alpha) & \sin(\alpha)\\
-\sin(\alpha) & \cos(\alpha)
\end{array}\right) \cdot x
\end{align}
Note that the rotations for $S_{r\ge R/2}$ are not described by a linear transformation due to the non-trivial boundary conditions. However, as one is ultimately interested in a thermodynamical limit where the infra-red regulator is removed via $R\to \infty$, all rotations of finite distance will have a corresponding $R$ from which on they can be described by (\ref{rotation_linear}). Hence, we choose any $r<R/2$ in the following.

In the remainder of this section we limit the analysis to $D=2$ as, once rotational invariance is established for all rotations in an arbitrary plane, any other rotation can be understood as multiple rotations in suitable planes. Further we employ the ideas of \cite{King86_1,King86_2}: instead of considering arbitrary angles in $[0,2\pi)$, it suffices to show invariance under rotations of only one angle $\theta$ given that $\theta/(2\pi)$ is irrational. This is because the sequence
\begin{equation}
\mathbb{N} \to [0, 2\pi);\; n\mapsto \theta_n:=n\cdot \theta \mod 2\pi
\end{equation}
lies dense in $[0,2\pi)$, i.e.  $\forall \theta' \in [0,2\pi)$ there exists a partial sequence 
$j\mapsto \theta_{n_j}\rightarrow \theta'$. 
Hence we can define the rotation by the angle $\theta'$ as
\begin{equation}\label{ApproximationAngle}
\Pi(\theta'):= \lim_{\theta_{n_j} \rightarrow \theta'} \Pi(\theta)^{n_j}
\end{equation}
It follows, assuming suitable continuity properties, that invariance under all these angles would be established, once it is shown for $\theta$.
In this paper we specialise to the angle $\theta$ defined by $\cos(\theta)=3/5, \sin(\theta)=4/5$ as it is indeed irrational. A proof for this and further properties can be found in \cite{King86_2}.

By the above considerations we can give meaning to the term {\it rotational invariance} as a condition on the continuum Hilbert space measure $\nu$. It is called rotationally invariant 
provided that for any measurable function $g$ we have $\nu(g)=\nu(r(\theta)^\ast\cdot g)$ where 
$(r(\theta)^\ast\cdot g)[\phi]=g[r(\theta)\cdot \phi]$ and $[r(\theta)\cdot \phi](x)=
\phi(\Pi(-\theta)\cdot x)$. Since $\nu$ is defined by its generating functional, we may 
restrict to the functions $g=w[f]$ for which in case of a scalar theory $r(\theta)^\ast w[f]=
w[r(-\theta)\cdot f]$. We now translate this into a condition on the cylindrical projections 
$\nu_M$ of $\nu$ defined by $\nu_M(w_M[f_M]):=\nu(w[I_M f_M])$ where  
\begin{align}
(I_M f_M)(x):=\sum_{m\in\mathbb{Z}^2_M}f_M(m)\chi_{m\epsilon_M}(x)
,\;
\chi_{m\epsilon_M}(x)=\prod_{a=1,2} \chi_{[m^a\epsilon_M,(m^a+1)\epsilon_M)}(x^a)
\end{align}
It follows that $r(-\theta)\cdot I_M f_m$ cannot be written as linear combinations of functions 
of the form $I_M f'_M$ because $r(-\theta)\cdot \chi_{M \epsilon_M}$ is the characteristic 
function of the rotated block. Hence the rotational invariance of $\nu$ does not have a direct 
translation into a condition of the $\nu_M$. While we can define a new embedding map 
by 
\begin{align}
I_{\theta M}: L_{M} &\rightarrow L\\
f_M &\mapsto [I_{\theta M} f_M](x):= \sum_{m\in\mathbb{Z}^2_M}  f_M(m)\chi_{m\epsilon_M}(\Pi(\theta)\cdot x)
\end{align}
the renormalisation flow defined by it may result in a fixed point covariance family
$c^*_{\theta M}$ different from $c^*_M$.
It is therefore a non-trivial question to ask what one actually means by {\it rotational invariance of a discrete lattice theory} or more precisely of a family of corresponding measures.

The idea is to consider both families (i.e. the non-rotated theory described by the covariances $c^*_M$ and the rotated one described by the covariances $c^*_{\theta M}$) as coarse-grained versions of {\it common} finer lattices with spacing $\epsilon_{5M}$ which is why we chose the above particular angle $\theta$.
The rotation of the coarse non-rotated lattice is a sublattice of the fine non-rotated lattice called {\it discrete rotation} and is defined by 
\begin{align}
D_{\theta}:\mathbb{Z}_{M}^2\rightarrow\mathbb{Z}_{5M}^2;\;
m\mapsto \Pi(\theta)\cdot m
\end{align}
This map can be extended to 
\begin{align}
D_\theta \mathbb{Z}_{5M}^2 \rightarrow \mathbb{Z}_{5M}^2
m\mapsto  \lfloor \Pi(\theta) \cdot m \rfloor
\end{align} 
which maps the whole rotated finer lattice into the non-rotated finer lattice. \begin{figure}[h]
\begin{center}
\includegraphics[scale=1]{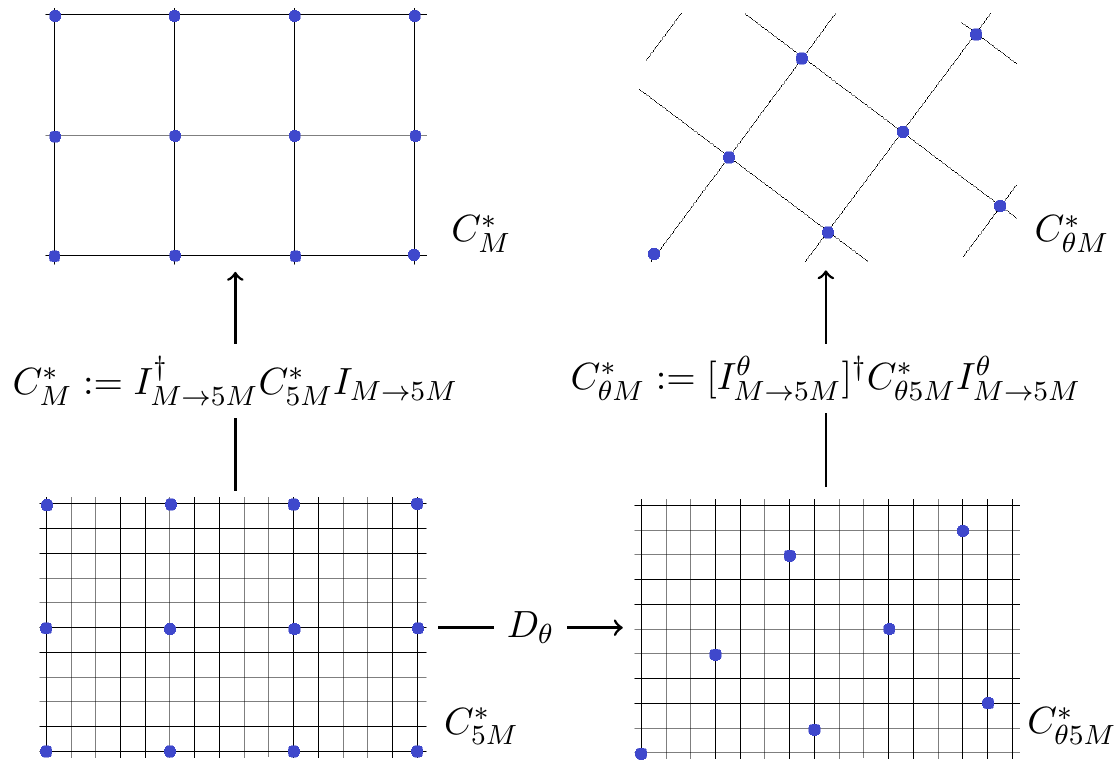}\label{figure1}
\caption{\footnotesize Fixed point covariances $C^*_M,\; C^\ast_{\theta M}$ on lattices rotated relative to each other by the irrational angle $\theta$ (such that $\cos(\theta)=3/5$) can be related by a common refined unrotated lattice and a map $D_{\theta}$, called discrete rotation.}
\end{center}
\end{figure}

The condition that we are about to derive holds for general measures but we also 
note in tandem the corresponding specialisation to Gaussian ones for a later test 
on our model free theory.
Suppose then that $\nu^\ast$ is a rotationally invariant (Gaussian) measure, that is, for its 
generating functional (covariance) we have 
\begin{align}
\nu^{\ast}(w[f])=\nu^\ast(w[\Pi(\theta) \cdot f])\;\;(c^\ast= \Pi(\theta)^\dagger c^\ast \Pi(\theta))
\end{align}
This means that for the cylindrical projections we have the identity
\begin{align}
\nu^\ast_M(w_M[f_M])=\nu^\ast(w[I_M f_M])=\nu^\ast(w[\Pi(\theta) I_M f_M])\;\;
(c^\ast_M=[\Pi(\theta) I_M]^\dagger c^\ast [\Pi(\theta) I_M])
\end{align} 
Now 
\begin{align}
(\Pi(\theta) I_M f_M)(x)=\sum_{m\in \mathbb{Z}_M^2} f_M(m) 
\chi_{m,\epsilon_M}(\Pi(\theta)^{-1}\cdot x)
\end{align}
Let $B_{m,M}$ be the square (block) of which $\chi_{m,\epsilon_M}$ is the characteristic function. 
Then 
\begin{align}
(\Pi(\theta) \cdot \chi_{m, \epsilon_M})(x)=
\chi_{m, \epsilon_M}(\Pi(\theta)^{-1}\cdot x)=\chi_{\Pi(\theta)\cdot B_{m,M}}(x)
\end{align}
is the characteristic 
function of the rotated block of the coarse lattice
with base (lower left corner) now at $\Pi(\theta)\cdot m\in \mathbb{Z}_{5M}^2$. Since we have the disjoint decomposition 
\begin{align}
B_{m,M}=\cup_{m'\in \mathbb{Z}_{5M}^2\cap B_{m,M}}\; B_{m',5M}
\end{align}
we have 
\begin{align} \label{rotationapproximation}
\Pi(\theta) B_{m,M}=\sum_{m'\in \mathbb{Z}_{5M}^2\cap B_{m,M}}\; \Pi(\theta)  B_{m',5M}
\approx 
\sum_{m'\in \mathbb{Z}_{5M}^2\cap B_{m,M}}\; B_{D_\theta \cdot m',5M}
\end{align}
where we have replaced in the last step the rotated blocks of the fine lattice, which before rotation
compose the unrotated block of the coarse lattice, by those unrotated blocks of the fine 
lattice with the bases at the points defined by $D_\theta$. This is an approximation only 
but it is better than one might think because the difference between the two functions 
only affects those blocks $B_{D_\theta \cdot m',5M}$ which intersect the boundary of $B_{\Pi(\theta) \cdot m, M}$. We will come back to the quality of this approximation below. 

In any case, the last line in (\ref{rotationapproximation}) defines an embedding
$I^\theta_{M\to 5M}:\; L_M \to L_{5M}$ by 
\begin{align}\label{4.377}
(I^\theta_{M\to 5M} f_M)(m')=\sum_{m''\in \mathbb{Z}_{5M}^2} \delta_{m',D_\theta \cdot m''}
\sum_{m\in \mathbb{Z}_M^2}\delta_{ m''\in B_{m ,M}} \;f_M(m)
\end{align}
such that $I_{5M} \circ I^\theta_{M\to 5M}$ approximates $\Pi(\theta) \cdot I_M$ in the sense 
specified below. Thus
\begin{align}
\nu^\ast_M(w_M[f_M])\approx \nu^\ast( w[I_{5M} I^\theta_{M\to 5M} f_M])=
\nu^\ast_{5M}(w_{5M}[I^\theta_{M\to 5M}f_M])
\end{align}
or for the Gaussian case
\begin{align}
c^\ast_M\approx [(I_{5M} \circ I^\theta_{M\to 5M}]^\dagger c^\ast [I_{5M}\circ I^\theta_{M\to 5M}]
 =[I^\theta_{M\to 5M}]^\dagger c^\ast_{5M} I^\theta_{M\to 5M}
 \end{align}
To write this just in terms of a single measure (covariance), we use cylindrical consistency\\
$\nu^\ast_M(w_M[f_M])=\nu^\ast_{5M}(w_{5M}[I_{M\to 5M} f_M])$ or 
$c^\ast_N=I_{M\to 5M}^\dagger c^\ast_{5M} I_{M\to 5M}$ to find 
 \begin{align}
\nu^\ast_{5M}(w_{5M}[I^\theta_{M\to 5M} f_M])
\approx
\nu^\ast_{5M}(w_{5M}[I_{M\to 5M} f_M])
\end{align}
or 
\begin{align}
 I_{M\to 5M}^\dagger c^\ast_{5M} I_{M\to 5M}\approx
 (I^\theta_{M\to 5M})^\dagger c^\ast_{5M} I_{M\to 5M}
 \end{align}
as a lattice version for rotational invariance for (Gaussian) measures for scalar field theories.
 
To specify the quality of the approximation depends on the details and properties of the corresponding
measure family. The following result is targeted to the class of Gaussian measures.
\begin{Lemma}
Suppose that $c^\ast$ is the covariance of a rotationally invariant Gaussian measure whose kernel is 
differentiable in the sense of distributions. Then 
\begin{align}   \label{ConditionForRotationalInvariantTheories}
\{c^\ast_M-[I^\theta_{M\to 5M}]^\dagger c^\ast_{5M} I^\theta_{M\to 5M}\}(m_1,m_2)=O(\epsilon_M^5)
\end{align}
for all $m_1,m_2\in \mathbb{Z}_M^2$. The coefficient of $\epsilon_M^5$ is independent of 
$M$. Note that $c^\ast_M(m_1,m_2)=O(\epsilon_M^4)$ in $D=2$.
\end{Lemma}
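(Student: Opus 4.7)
The plan is to reduce the estimate to a per-fine-block Taylor expansion and then sum. First I would exploit rotational invariance of the continuum kernel, $c^\ast(\Pi(\theta)x,\Pi(\theta)y)=c^\ast(x,y)$, to rewrite
\begin{align*}
c^\ast_M(m_1,m_2)=\langle u^{\rm t}_{m_1},c^\ast u^{\rm t}_{m_2}\rangle,\qquad u^{\rm t}_m:=\chi_{\Pi(\theta)B_{m,M}},
\end{align*}
while a direct unfolding of (\ref{4.377}) together with $c^\ast_{5M}=\epsilon_{5M}^{-4}I_{5M}^\dagger c^\ast I_{5M}$ gives
\begin{align*}
[(I^\theta_{M\to 5M})^\dagger c^\ast_{5M}I^\theta_{M\to 5M}](m_1,m_2)=\langle u^{\rm a}_{m_1},c^\ast u^{\rm a}_{m_2}\rangle,\quad u^{\rm a}_m:=\!\!\!\!\sum_{m''\in B_{m,M}\cap\mathbb{Z}^2_{5M}}\!\!\!\!\chi_{B_{D_\theta m'',5M}},
\end{align*}
up to the common normalisation. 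Setting $e_m:=u^{\rm t}_m-u^{\rm a}_m$, the quantity to bound becomes the symmetric cross term $\langle e_{m_1},c^\ast u^{\rm t}_{m_2}\rangle+\langle u^{\rm a}_{m_1},c^\ast e_{m_2}\rangle$, so it suffices to estimate $\int e_m\,\phi$ for $\phi$ of the form $c^\ast u^{\rm t/a}_{m'}$.

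Next, decomposing $e_m=\sum_{m''\in B_{m,M}\cap\mathbb{Z}^2_{5M}}[\chi_{\Pi(\theta)B_{m'',5M}}-\chi_{B_{D_\theta m'',5M}}]$, I would Taylor-expand $\phi$ around the respective box centres $y^{(1)}_{m''}:=\Pi(\theta)(m''+\tfrac12)\epsilon_{5M}$ and $y^{(2)}_{m''}:=(D_\theta m''+\tfrac12)\epsilon_{5M}$. Two symmetries matter here: the first moment of a centred box vanishes, and the second-moment tensor $\int u^iu^j\,du=\tfrac{\epsilon_{5M}^4}{12}\delta^{ij}$ is the same for the rotated and unrotated $5M$-box because $\Pi(\theta)\Pi(\theta)^T=\mathbb{1}$. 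Both facts yield the local identity
\begin{align*}
\int\bigl[\chi_{\Pi(\theta)B_{m'',5M}}-\chi_{B_{D_\theta m'',5M}}\bigr]\phi=\epsilon_{5M}^2\bigl[\phi(y^{(1)}_{m''})-\phi(y^{(2)}_{m''})\bigr]+O\bigl(\epsilon_{5M}^5\|\nabla\Delta\phi\|_\infty\bigr).
\end{align*}

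Finally I would sum over the $25$ fine blocks inside $B_{m,M}$. The displacement factor $y^{(1)}_{m''}-y^{(2)}_{m''}=\epsilon_{5M}\,\eta(m'')$ has $\eta(m'')$ bounded and, for the angle $\theta$ with $(\cos\theta,\sin\theta)=(3/5,4/5)$, depending only on $m''\bmod 5$ componentwise; hence $\sum_{m''\in B_{m,M}}(y^{(1)}_{m''}-y^{(2)}_{m''})$ is a vector of size $O(\epsilon_M)$ whose value is independent of the coarse index $m$. Combined with the fact that $\phi=c^\ast u$ with $\|u\|_\infty\le 1$ and $\operatorname{supp} u$ of area $O(\epsilon_M^2)$ carries an $O(\epsilon_M^2)$ pre-factor also on all derivatives, one obtains
\begin{align*}
\int e_m\,\phi=\epsilon_{5M}^2\sum_{m''}\nabla\phi(y^{(2)}_{m''})\cdot(y^{(1)}_{m''}-y^{(2)}_{m''})+O(\epsilon_M^7)=O(\epsilon_M^5),
\end{align*}
with a leading constant determined by $\nabla c^\ast(m_1\epsilon_M,m_2\epsilon_M)$ and the $\eta$-sum, hence independent of $M$. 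Both cross terms are estimated identically, which gives the claim.

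The main obstacle is the regularity input. Although the statement only assumes distributional differentiability of $c^\ast$, the Taylor expansion genuinely needs pointwise $C^2$ bounds on $\phi=c^\ast u$ that carry an $O(\epsilon_M^2)$ pre-factor uniformly in $m_1,m_2,M$. For the massive Klein--Gordon covariance this is unproblematic because $c^\ast$ is smooth off-diagonal with at worst a logarithmic singularity on coincidence, which the convolution against a characteristic function of area $\epsilon_M^2$ tames; but a clean general argument would have to treat the near-diagonal regime $m_1\approx m_2$ separately and track the pre-factors with the correct $\epsilon_M^2$ scaling through the derivatives.
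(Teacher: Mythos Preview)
Your proof is correct and reaches the same $O(\epsilon_M^5)$ conclusion, but by a genuinely different route. The paper splits the true rotated coarse block $\Pi(\theta)B_{m,M}$ and its lattice surrogate $B^\theta_{m,M}=\bigcup_{m''}B_{D_\theta m'',5M}$ into their common intersection $S^\theta_{m,M}$ and the two boundary-layer sets $\Delta^{\theta\pm}_{m,M}$, then exhibits a measure-preserving homeomorphism $h:\Delta^{\theta+}\to\Delta^{\theta-}$ of the form $h(x)=x+g(x)\epsilon_M$ with $\|g\|\le\sqrt{2}$ and Taylor-expands $c^\ast(x,y)-c^\ast(x,h(y))$ directly. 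You instead keep the full blocks and work fine-block-by-fine-block, pairing each rotated fine square $\Pi(\theta)B_{m'',5M}$ with its axis-aligned replacement $B_{D_\theta m'',5M}$ and exploiting that their centred zeroth, first, and second moments coincide (the last because $\Pi(\theta)\Pi(\theta)^T=\mathbb{1}$).

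Your moment-matching argument is more structural: it makes transparent which Taylor orders cancel (indeed your remainder is better than the $O(\epsilon_{5M}^5)$ you write, since third moments vanish by parity as well) and it generalises cleanly to the $M\to5^nM$ refinements mentioned after the lemma. The paper's geometric argument is leaner---one displacement field, no moments---and makes the boundary-layer picture explicit. One small inaccuracy on your side: the fractional part $\{\Pi(\theta)m''\}$ depends on $m''\bmod 5$ as a \emph{pair}, not componentwise; but since the $25$ fine blocks in $B_{m,M}$ hit each residue class exactly once, your conclusion that $\sum_{m''}\eta(m'')$ is independent of $m$ survives. Your closing caveat about the regularity input is apt and applies equally to the paper's proof, which also ends with ``the claim now follows by considering a power series expansion of $c$'' without isolating the near-diagonal regime.
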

\begin{proof}
Let $B^\theta_{m,M}=\cup_{m'\in \mathbb{Z}^2_{5M}\cap B_{m,M}} B_{D_\theta m', 5M}$ 
and $S^\theta_{m,M}:=\Pi(\theta) B_{m,M}\cap B^\theta_{m,M}$. Denote  
$\Delta^{\theta +}_{m,M}=\Pi(\theta) B_{m,M}-S^\theta_{m,M}$ and 
$\Delta^{\theta -}_{m,M}=B^\theta_{m,M}-S^\theta_{m,M}$. 
The sets $\Delta^{\theta\pm}_{m,M}$ are homeomorphic since $B^\theta_{m,M}$ consists 
of the squares of $\mathbb{Z}_{5M}^2$ whose lower left corner lies in $\Pi(\theta) B_{m,M}$.
Thus $B^\theta_{m,M}$ lacks parts of $\Pi(\theta) B_{m,M}$ at the left two boundaries 
of $\Pi(\theta) B_{m,M}$ while $B^\theta_{m,M}$ exceeds $\Pi(\theta) B_{m,M}$ at its 
two right boundaries. Hence $\Delta^{\theta\pm}_{m,M}$ are complementary disjoint sets 
whose joint measure is equal to the measure of an integer number of squares of 
the lattice $\mathbb{Z}^2_{5M}$. 
They also have the same Lebesgue measure because $\Pi(\theta) B_{m,M}$ has measure $\epsilon_M^2$ due to rotational invariance of the Lebesgue measure 
and $B^\theta_{m,M}$ has measure 
 $5^2 \epsilon_{5M}^2=\epsilon_M^2$ because $D_\theta$ is injective as is easy to check so that $B^\theta_{m,M}$ consists of 25 squares of the lattice
$\mathbb{Z}^2_{5M}$. Let $h:\Delta^{\theta +}_{m,M}\mapsto \Delta^{\theta -}_{m,M}$ be the 
corresponding homeomorphism which can be written in the form $h(x)=x+g(x) \epsilon_M$ with 
$||g(x)||\le \sqrt{2}$ as the maximal distance between points in the two sets is $\sqrt{2}\epsilon_M$.  
Then by rotational invariance we obtain the third line in
\begin{eqnarray}
&&\{c^\ast_M-[I^\theta_{M\to 5M}]^\dagger  c^\ast_{5M} I^\theta_{M\to 5M}\}(m_1,m_2)=\nonumber\\
&=&\{\int_{ B_{m_1,M}}\;d^2x\int_{B_{m_2,M}}\; d^2y
-\int_{B^\theta_{m1,M}}\;d^2x\int_{B^\theta_{m_2,M}}\; d^2y\}c(x,y)
\nonumber\\
&=&\{\int_{\Pi(\theta) B_{m_1,M}}\;d^2x\int_{\Pi(\theta) B_{m_2,M}}\; d^2y
-\int_{B^\theta_{m1,M}}\;d^2x\int_{B^\theta_{m_2,M}}\; d^2y\}c(x,y)
\nonumber\\
&=&\{\int_{S^\theta_{m_1,M}}\;d^2x\int_{\Delta^{\theta+}_{m_2,M}}\; d^2y
+\int_{\Delta^{\theta+}_{m_1,M}}\;d^2x\int_{S^\theta_{m_2,M}}\; d^2y
+\int_{\Delta^{\theta+}_{m_1,M}}\;d^2x\int_{\Delta^{\theta+}_{m_2,M}}\; d^2y
\nonumber\\
&& -\int_{S^\theta_{m_1,M}}\;d^2x\int_{\Delta^{\theta-}_{m_2,M}}\; d^2y
-\int_{\Delta^{\theta-}_{m_1,M}}\;d^2x\int_{S^\theta_{m_2,M}}\; d^2y
-\int_{\Delta^{\theta-}_{m_1,M}}\;d^2x\int_{\Delta^{\theta-}_{m_2,M}}\; d^2y
\} c(x,y)
\nonumber\\
&=&\int_{S^\theta_{m_1,M}}d^2x\int_{\Delta^{\theta+}_{m_2,M}} d^2y [c(x,y)-c(x,y+g(y) \epsilon_M)]+\nonumber\\
&& +\int_{\Delta^{\theta+}_{m_1,M}}d^2x\int_{S^\theta_{m_2,M}} d^2y[c(x,y)-c(x+g(x)\epsilon_M,y)] \nonumber\\
&& +\int_{\Delta^{\theta+}_{m_1,M}}d^2x\int_{\Delta^{\theta+}_{m_2,M}}d^2y
[c(x,y)-c(x+g(y)\epsilon_M, y+g(y)\epsilon_M]
\end{eqnarray}
from which the claim now follows by considering a power series expansion of $c$.\\
\end{proof}

The lemma does not tell us anything about the size of the coefficient of $\epsilon_M^5$ and 
thus of the actual quality at given $M$, however, assuming that the coefficient is 
finite, for sufficiently large $M$ the approximation error is as small as we want 
compared to the value of the
discrete kernel $c^\ast_M(m_1,m_2)$. 

We translate the approximant 
$c^\ast_{M\theta}:=[I^\theta_{M\to 5M}]^\dagger c^\ast_{5M} I^\theta_{M\to 5M}$ 
whose coefficients are explicitly given by (using translation invariance) 
\begin{align}
c^*_{M\theta}(m)=\frac{1}{5^4}\sum_{\delta_1,\delta_2\in\{0,...,4\}^2}c^*_{5M}(D_{\theta}(5m+(\delta_1-\delta_2)))
\end{align}
into the corresponding Fourier coefficients over which we have better analytic control
\begin{align}\label{prefinalRotInv2}
c^*_{\theta M}(m)&=\sum_{l\in\mathbb{Z}^2_{M}}e^{ik_Ml\cdot m}\hat{c}^*_{\theta M}(l)\\
&=\frac{1}{5^4}\sum_{l\in\mathbb{Z}^2_M}\sum_{\delta \in\{-2,...,2\}^2}e^{ik_Ml\cdot m}e^{ik_MM\delta\cdot m}\sum_{\delta_1,\delta_2\in\{0...4\}^D}e^{ik_{5M}(l+M\delta)\cdot(\delta_1-\delta_2)}\hat{c}^*_{5M}(D_{\theta}(l+M\delta))\nonumber
\end{align}
where we used the fact that $D_{\theta}$ is a bijective map to obtain the third line, as well as for the fourth line, where we have relabelled $D_{\theta}l'\rightarrow l'$ and split $l'=l+M\delta$. We have chosen the interval $\delta\in\{-2,...,2\}^2$ because of its symmetry regarding rotations around the point $x_0=0$, which are considered here, using the periodicity of the boundary conditions.  Performing the sum 
over $\delta_1,\delta_2$ and comparing coefficients we obtain 
\begin{align}\label{finalRotInv}
\hat{c}^*_{\theta M}(l)=\frac{1}{5^4}\sum_{\delta\in\{-2...2\}^2}\prod_{i=1}^2\frac{\sin(\frac{5k_{5M}}{2}[l_i+M\delta_i])^2}{\sin(\frac{k_{5M}}{2}[l_i+M\delta_i])^2}\hat{c}^*_{5M}(D_\theta(l+M\delta))
\end{align}
which can now be readily numerically compared to $c^\ast_M(l)$ (after writing it as an integral
over $k_0$).  

We remark that for rotational invariance under an arbitrary angle $\theta'$ we pick
an approximant $n\cdot \theta$ mod $2\pi$ for sufficiently large $n\in \mathbb{N}$. 
Then, the whole analysis can be repeated using the $M\to 5^n M$ refinement since 
$\Pi(\theta)^n$ is a matrix with rational entries with common denominator $5^n$. 
Since the sets $\Delta^{\theta\pm}_{m, 5^n}$ involve an order of $4\times 5^n$ boundary squares 
of respective measure $\epsilon_{5^n M}^2=\epsilon_M^2 5^{-2n}$ the relative error here would 
even be smaller, i.e. of order $5^{-n} \epsilon_M$.  \\
We leave the detailed numerical analysis of rotational invariance for future publications.\\

\subsection{Example: Numerical investigation for rotational invariance of the free scalar field theory}
In this subsection, we test our criterion numerically using the fixed point theory in $D=2$, which we know to be rotationally invariant in the continuum.\\
First, we verify that the family of covariances $c^\ast_M$ is invariant under rotations by $\pm \pi/2$. 
It suffices to consider the rotation $\Pi(\pi/2)$ and apply this to (\ref{finalIntegralof2DRen}) which is symmetric under exchange of $t_1\leftrightarrow t_2$ (since we could have 
interchanged the roles of those in the contour integral). We have
\begin{align}
\langle r(\pi/2)& f_M, c^*_M r(\pi/2)f'_M\rangle_M=\\
&=\epsilon^4_M\sum_{m,m'\in\mathbb{Z}^2_M}f_M(m)f'_M(m')\sum_{n\in\mathbb{Z}^2_M}e^{ik_M n\cdot (m-m')}\hat{c}^*_M((\Pi(\pi/2)^{-1}\cdot n)_1,(\Pi(\pi/2)^{-1}\cdot n)_2)\nonumber
\end{align}
\noindent Thus, for $\pi/2$ equation (\ref{ConditionForRotationalInvariantTheories}) becomes the condition:
\begin{equation}\label{RIconditionPI/2}
\hat{c}^*_M(n_1,n_2)=\hat{c}^*_M(-n_2,n_1)\; ,\hspace{10pt} \forall n=(n_1,n_2)\in \mathbb{Z}_M^2
\end{equation}
which is fulfilled in case of the free scalar field (\ref{finalIntegralof2DRen}) due to its symmetry and $\cos(-t_i)=\cos(t_i)$.\\

We will now investigate numerically whether the fixed point covariance satisfies the criterion for rotational invariance (\ref{ConditionForRotationalInvariantTheories}). As a sufficient example, we consider the afore-mentioned irrational angle $\theta$, such that $\cos(\theta)=3/5$. Moreover, we will set the IR cut-off to $R=1$ for simplicity and without loss of generality the number of spatial dimensions to  $D=2$. As the value of the mass $p$ and the parameter $k_0$ in (\ref{CovarianceResult2D}) only appear in the combination  $q^2:=(p^2+k_0^2)\epsilon^2_M$, it suffices to fix the latter one to account for both. Here, we choose $q_1^2:=p^2+k_0^2=1$.\\
First, we present the covariance $c_M^*$ itself for $M=40$ in figure \ref{FigureIV_Cov40} where the point $m=(0,0)$ lies in the centre. Due to the periodic boundaries the values on the corners do agree with each other. One can see that the next neighbour interactions drop rapidly with $m\in\mathbb{Z}^2_M=\{0,1,...,M-1\}^2$. The same is true for its Fourier transform $\hat{c}^*_M$. Moreover, the covariance at finite resolution is not invariant under arbitrary rotations, but, heuristically, it appears that the asymmetry could be smoothed out with increasing resolution $M$.\\

\begin{figure}[h]
\begin{center}
\includegraphics[width=0.4\textwidth]{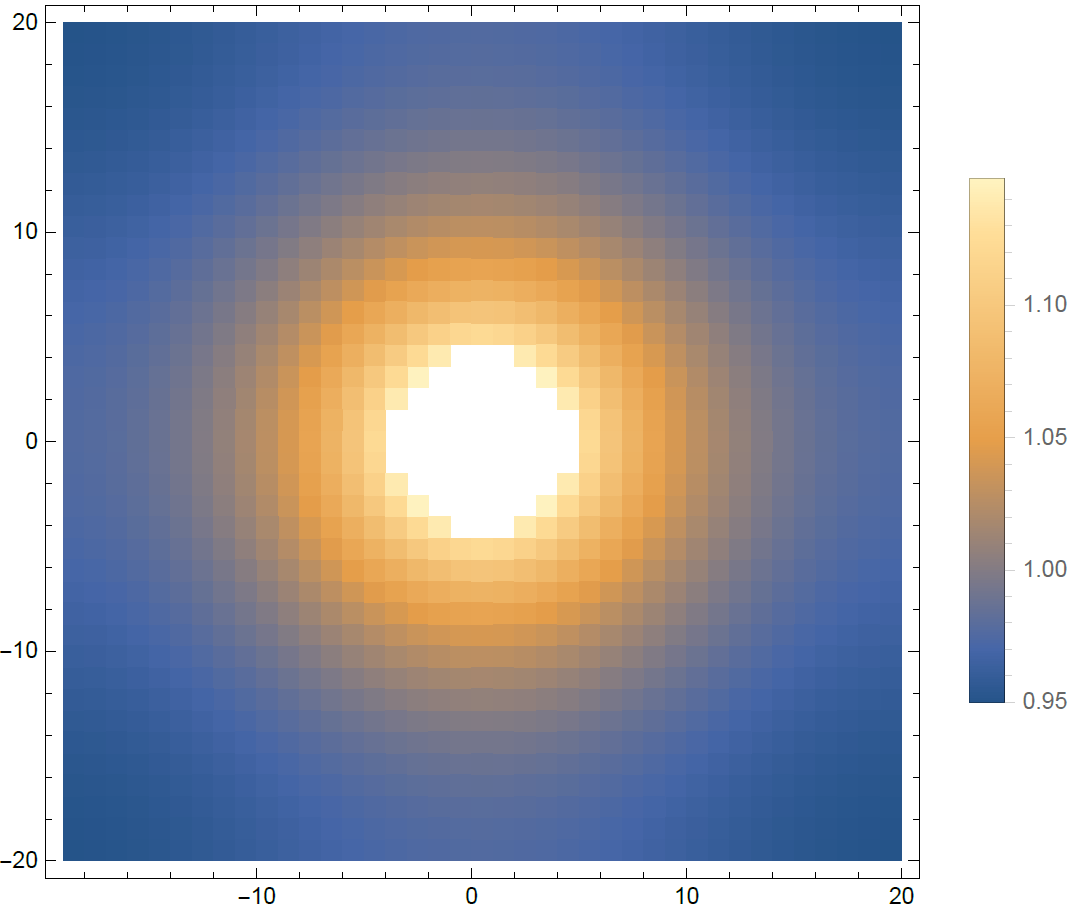}
\includegraphics[width=0.5\textwidth]{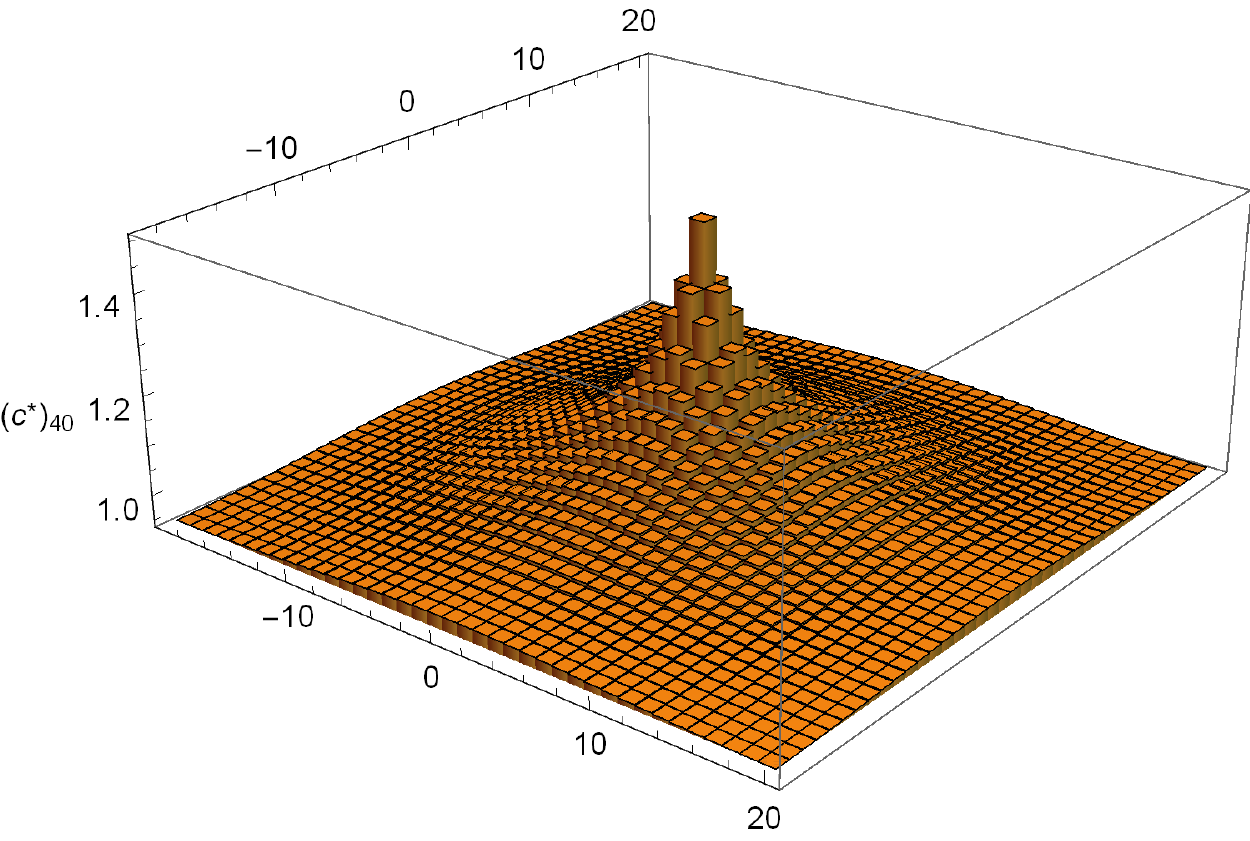}
\caption[Fixed point covariance of resolution $M=40$]{\footnotesize \label{FigureIV_Cov40} 
The covariance $c_M^*(m)$ of the fixed point theory in $D=2$ spatial dimensions. We have chosen the IR cut-off $R=1$, mass $p=1$, and $k_0=0$. The torus $[0,1)^2$ is approximated by a lattice with $M=40$ points in each direction, where the point $m=(0,0)$ lies in the centre of the plotted grid. As one can see, the contributions from next neighbour frequencies are highly suppressed.}
\end{center}
\end{figure}
\begin{figure}[h]
\begin{center}
\includegraphics[width=0.44\textwidth]{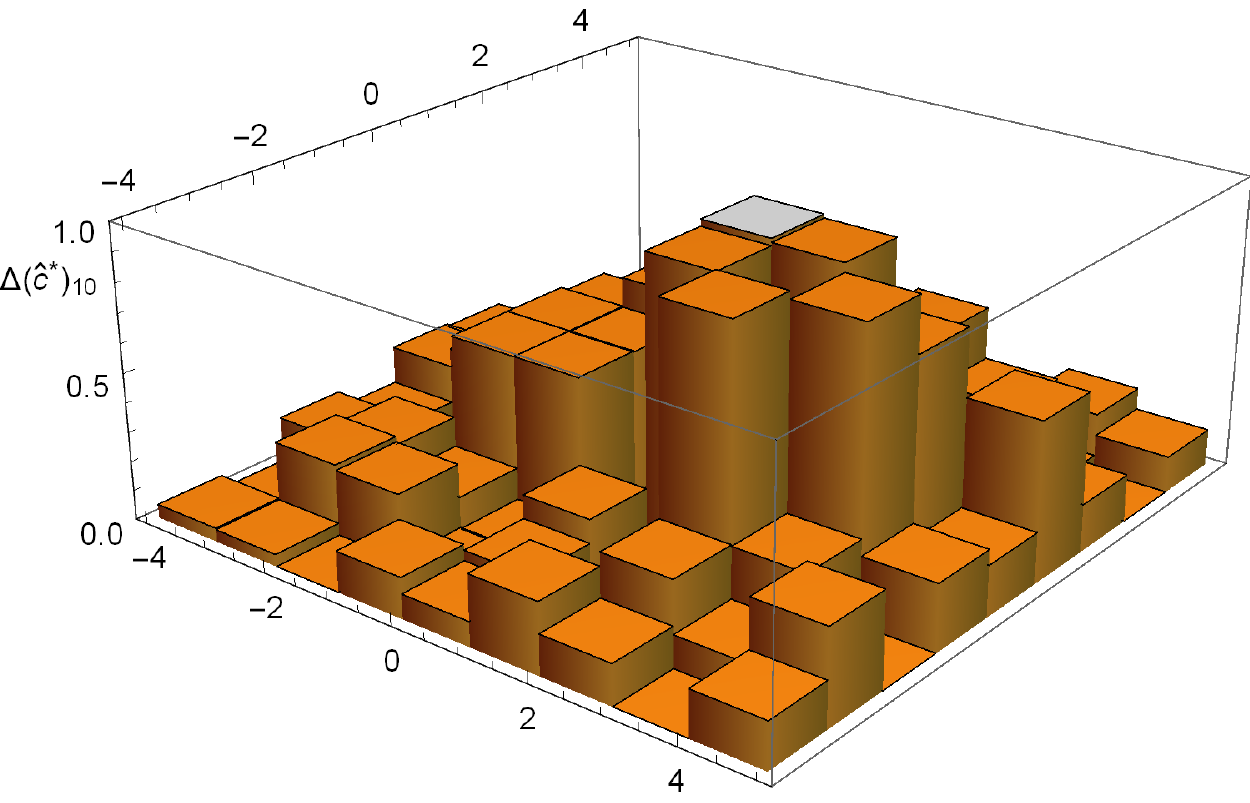}
\includegraphics[width=0.44\textwidth]{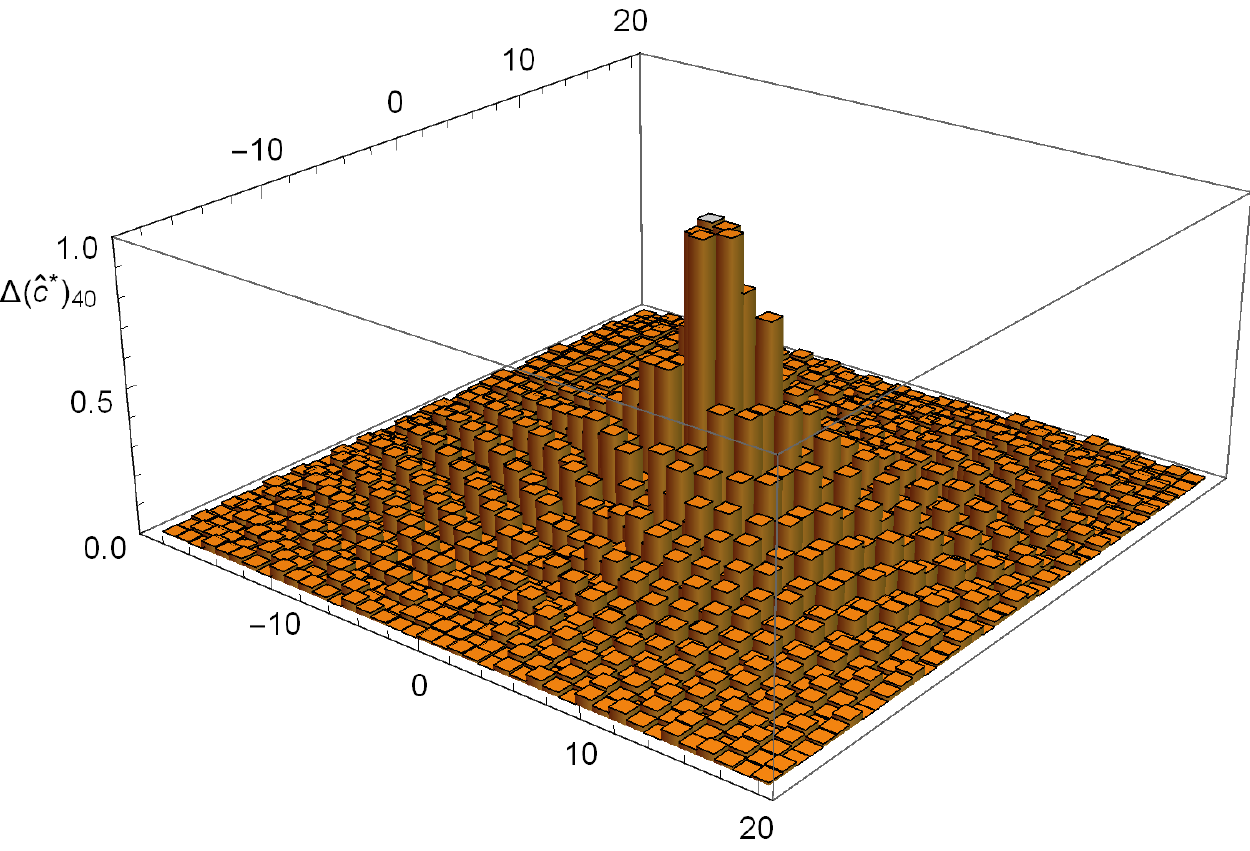}
\includegraphics[width=0.44\textwidth]{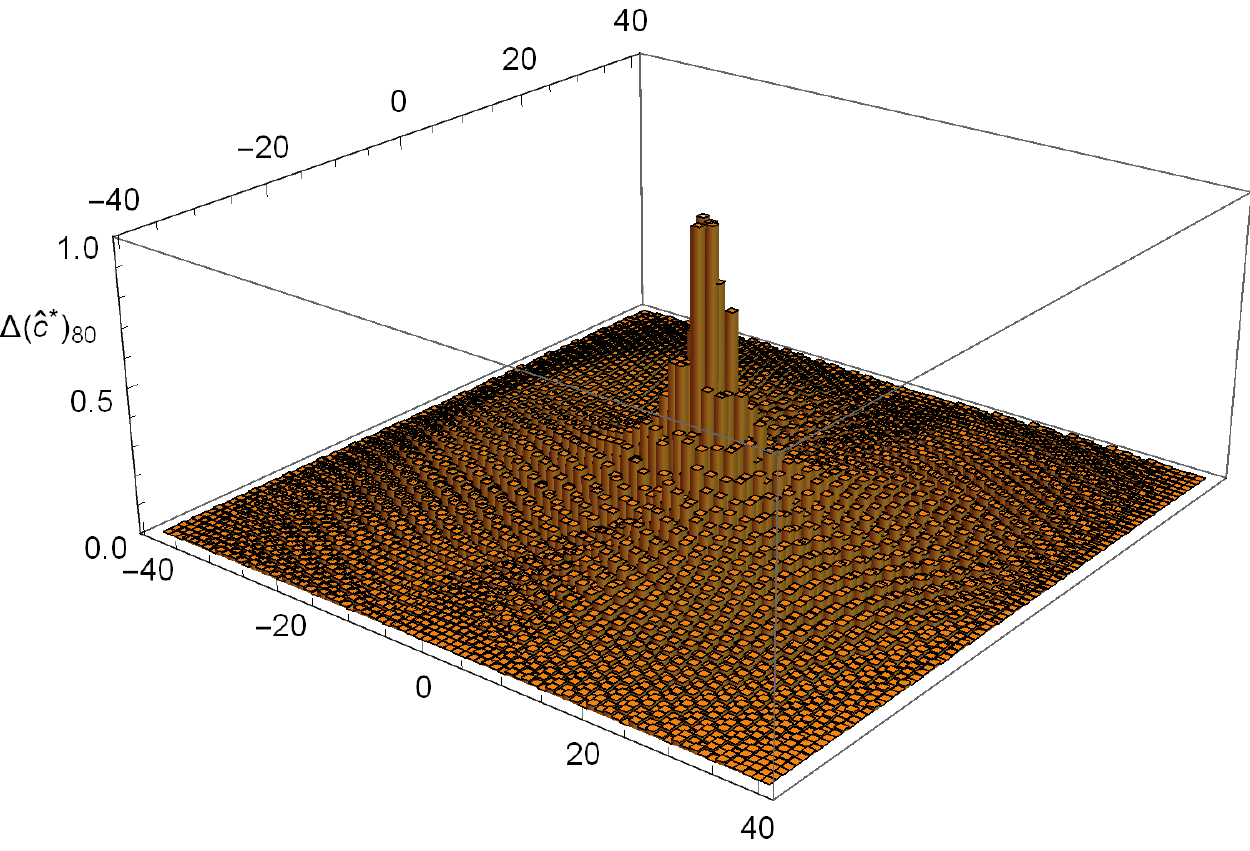}
\includegraphics[width=0.44\textwidth]{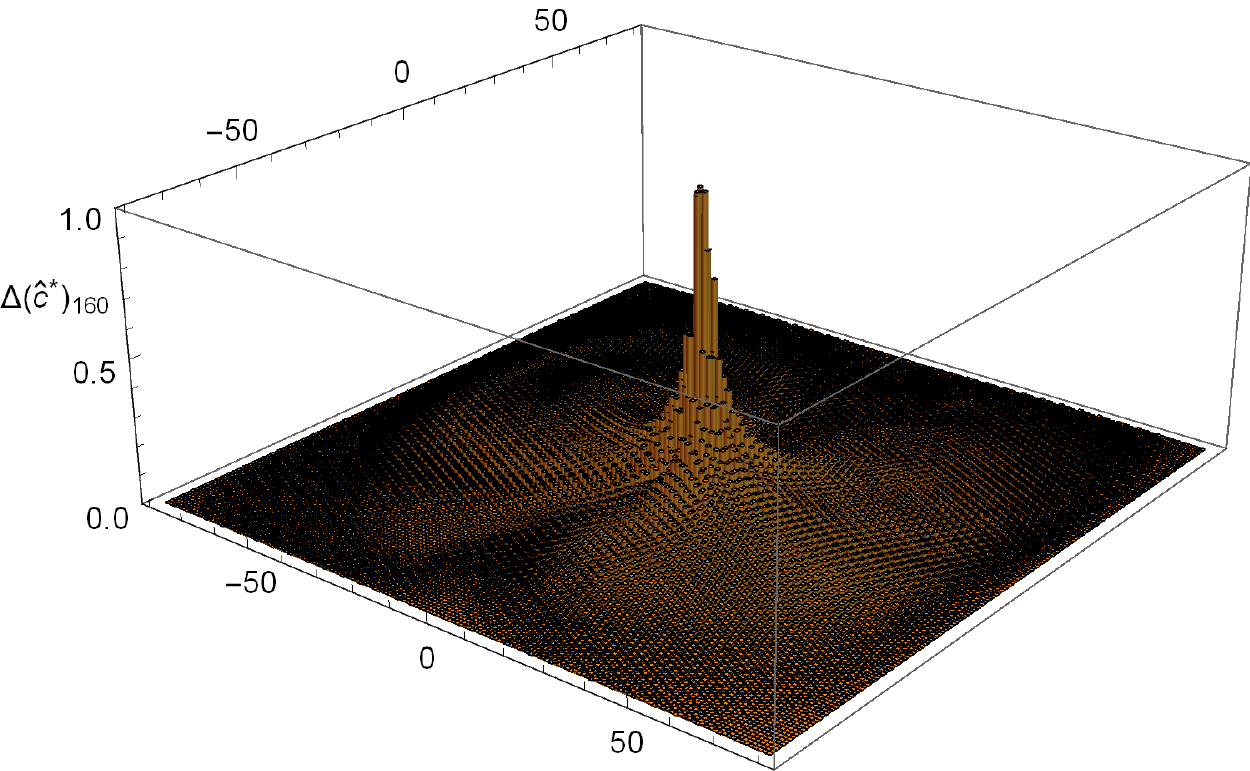}
\caption[Comparison of rotational invariance at different resolutions]{\footnotesize \label{FigureIV_Error} 
For lattices of size $M=10,\; 40,\; 80,\;160$ the relative deviation $\Delta\hat{c}_M^*(l)=[|\hat{c}_M^*-\hat{c}^*_{\theta,M}|/\hat{c}^*_M](k_0=0,l)$ is plotted for $l\in \mathbb{Z}_{M}^2$ with mass $p=1$ and IR cut-off $R=1$ . High values of $\Delta\hat{c}_M^*$ indicate non-invariance of the covariance at given resolution under rotations. (The {\it grey} data point lies outside the plotted range of $[0,1)$, with numerical value $\approx 40$.)  We find that the relative deviation is non-vanishing everywhere at finite resolution, however it decreases with $M^{-1}$ because $\hat{c}^*_M-\hat{c}^*_{\theta M} \sim O(\epsilon_M^5) \sim \hat{c}^*_M \epsilon_M$. This is the approximative behaviour of a rotationally invariant fixed point theory. For $M=160$ the computed covariance features already rotational invariance to a high precision.}
\end{center}
\end{figure}

Next, we consider the quality of the approximant to the rotated covariance as $M$ varies. This approximant, $c^*_{M\theta}$, is the Fourier transform of (\ref{finalRotInv}) and should agree with the unrotated covariance $c^*_M$ up to a mistake $\mathcal{O}(\epsilon_M^5)$, given the fixed point covariance restores rotational invariance in the continuum. As the same must be true for their Fourier transforms, we consider $\hat{c}^*_{M\theta}$ and $\hat{c}^*_M$ on lattices of different size $M$ and study whether their deviation decays appropriately. Both covariances are of order $\mathcal{O}(\epsilon^4_M)$, hence their {\it relative deviation} should decay with $\mathcal{O}(\epsilon_M)$:
\begin{align}
\Delta \hat{c}^*_M(l):=\frac{|\hat{c}^*_M(0,l)-\hat{c}^*_{M\theta}(0,l)|}{\hat{c}^*_M(0,l)}\sim \mathcal{O}(\epsilon_M)
\end{align}
That it decays indeed fast, is shown in figure \ref{FigureIV_Error} for lattices of size $M=10,40,80$ and $160$. Although at low resolution the covariance features a high discrepancy with the approximant $\hat{c}_{M\theta }^*$, the relative deviation $\Delta\hat{c}^*_M$ becomes smaller as the resolution of the spatial manifold increases. Only in a neighbourhood of the centre of the grid, i.e. the point around which we rotate, the approximation fails. But, this neighbourhood shrinks linearly with the resolution $M$. For $M=160$ the computed covariance already features rotational invariance to a high precision.\\
To study the decay behaviour of $\Delta\hat{c}^*_{M}(l)$ further, one could now consider the characteristic function $\chi_B$ of a region $B\subset[0,1)^2$ and compare, for different resolutions $M$, the mean $\overline{\Delta\hat{c}_M}[\chi_B]$ of the relative deviation in this region, i.e. the mean of $\Delta\hat{c}^*_{M}(l)$ over all $l\in\mathbb{Z}^2_M$ such that $\text{supp}(\chi_l)\subset B$. For example, for $l_0=(0,2)\in\mathbb{Z}_5^2$ let the support of $\chi_{l_0}$ be the region of interest, i.e. on resolution $M_0=5$ we have $\overline{\Delta\hat{c}_{M_0}}[\chi_{l_0}]=\Delta\hat{c}^*_{5}(l_0)$. At any other $M\in5\mathbb{N}$, we consider the refinement in $\mathbb{Z}_M^2$, i.e. the points $\frac{M}{5} l_0+\delta$ for $\delta\in [0,M/5-1]^2$. We find that the mean 
\begin{align}
\overline{\Delta\hat{c}_M}[\chi_{l_0}]:=\frac{1}{(M/5)^2}\sum_{\delta\in [0,M/5-1]^2}\Delta\hat{c}^*_{M}(\frac{M}{5} l_0+\delta)
\end{align}
is decaying with $M^{-1}$, see figure \ref{FigureIV_Decay} for two examples. It confirms that (\ref{ConditionForRotationalInvariantTheories}), i.e. the condition for rotational invariance, is satisfied up to an error of $\epsilon_M=M^{-1}$ to a very high precision for the considered examples and thus indicates that rotational invariance will be recovered in the continuum.\\

\begin{figure}[H]
\begin{center}
\includegraphics[width=0.44\textwidth]{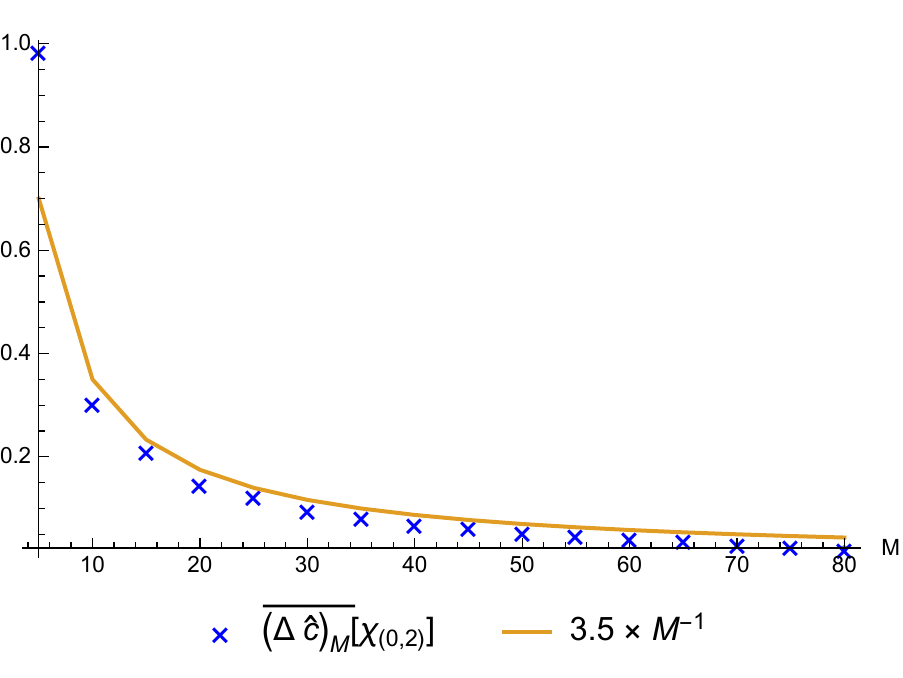}
\includegraphics[width=0.44\textwidth]{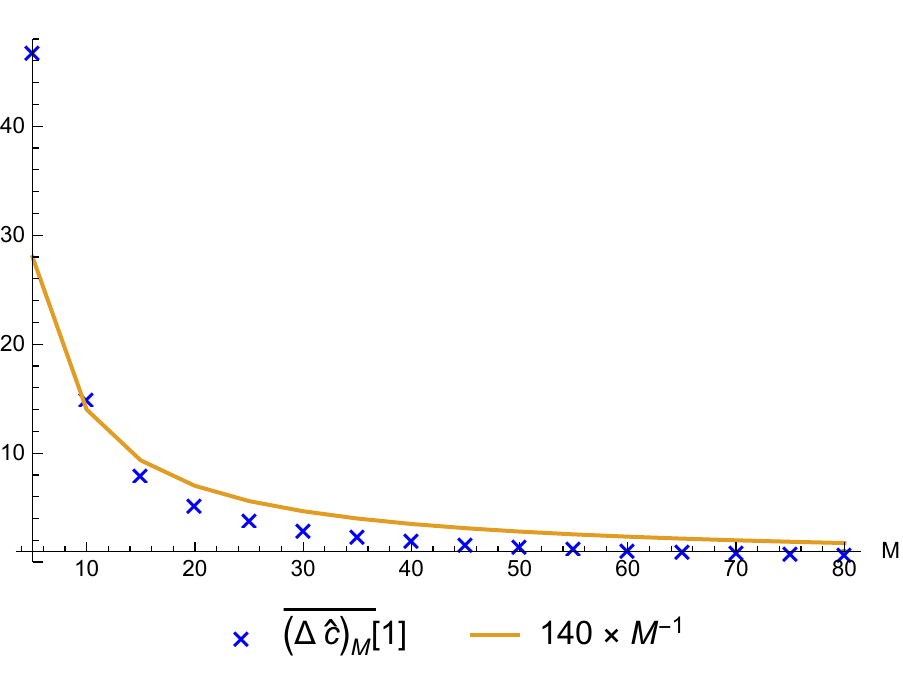}
\caption[Decay behaviour of the relative deviation from rotational invariance for increasing resolution]{\footnotesize \label{FigureIV_Decay} 
The decay behaviour of the mean $\overline{\Delta\hat{c}_M}[\chi]$ of the relative deviation over a region with characteristic function $\chi$ is presented. For two distinct regions, we compute it at different resolutions $M$. On the left, $\chi_{(0,2)}$ is the characteristic function of this block that can be associated with the point $m_0=(0,2)$ on resolution $M=5$. The values for $\overline{\Delta\hat{c}_M}[\chi]$ are shown in blue and we approximate the decay behaviour by the function $f(M)=3.5\;M^{-1}$ (in orange). Similarly, $\chi=1$ is associated with the whole torus $[0,1)^2$ and is presented on the right. Here, the decay is best approximated by $f(M)=140\times M^{-1}$ in orange. These two are arbitrary cases, however we expect the decay to be of this form for each region. This confirms that the decay behaviour is sufficiently fast to account for a rotationally invariant fixed point theory.}
\end{center}
\end{figure}

\section{Conclusion}
\label{Conclusion}

While the framework of covariant Renormalization \cite{Riv91,Comb04,Bal89a,Bal89b,Dim12a,Dim12b,Dim12c} has a long story of success, its implementation and application in the Hamiltonian setting have remained largely unexplored. With this series of papers, we have made first steps in this direction: In \cite{LLT1} 
we motivated a Hamiltonian renormalisation scheme and in \cite{LLT2,LLT3} we tested this scheme 
successfully for the free massive Klein Gordon field in $D+1=2$ spacetime dimensions.
In this paper we extended this test successfully to arbitrary dimensions. This extension 
made it possible also to test the robustness of the fixed point under changes of the 
coarse graining map which defines the renormalisation flow as well as how to test the 
rotational invariance of the fixed point theory by using the finite resolution projections of the 
corresponding Hilbert space measure. The latter test is useful in situations in which the  computation of the 
continuum measure is too complicated, but in which one has at least numerical access to its
approximate finite resolution cylindrical projections obtained by iterating the flow equations 
sufficiently often, provided its convergence or at least the existence of a fixed point 
can be established.  

The next step in our programme will be to extend the framework to gauge theories, as the most interesting models of modern physics are phrased in this language, e.g. QCD, and test whether the validity of the direct Hamiltonian renormalisation is also given therein. Afterwards, one could apply the framework in the context of gravity. This includes the ``Asymptotic Safety'' programme \cite{RS02,Per09,RS12} as well as other approaches to quantum gravity, such as Loop Quantum Gravity (LQG) \cite{Rov04,AL04,Thi07}. As the latter one was originally formulated in the canonical setting, it is hoped that the strategy outlined in this series could fixed the quantization ambiguities arising when defining the constraint 
or Hamiltonian operators, see e.g. \cite{LT16,Thi96_1,Thi96_2,ALM15}. In the context 
of LQG different regularisation schemes (e.g. based on different ordering prescriptions)  lead to different operators \cite{DL17a,DL17b} and the goal must be to find ideally unique constraint operators of general relativity which do not display those ambiguities anymore. A strict 
criterion is to obtain a theory free of anomalies for the symmetries of the theory.
\section*{Acknowledgements}
Part of this work was financially supported by a grant from the Friedrich-Alexander University to the Emerging Fields Project ``Quantum Geometry" under its Emerging Fields Initiative. 
TL thanks the Heinrich-B\"oll Foundation for financial support.
KL thanks the German National Merit Foundation for financial support.\\
\\

\begin{appendix}

\section{Detailed computations of section \ref{Hamiltonian renormalisation}}
\label{sa}

In this appendix we fill the gaps which were left out in section  \ref{Hamiltonian renormalisation} of the main text. The first paragraph investigates the fact that the multi-dimensional renormalisation transformation (\ref{Covarianceflow}) decouples in its spatial directions. The second paragraph describes how the integral in (\ref{finalIntegralof2DRen}) can be performed.\\

As discussed in the main section, the initial covariance can be factorised into two factors which very closely resemble the 1+1 dimensional case
\begin{equation} \label{startingdecoupledcovariance_app}
\hat{c}^{(0)}_M(l)=-\oint_{\gamma} dz \frac{\epsilon_M^4}{8\pi i}\; 
\frac{1}{q_1^2(z)/2+1-\cos(t_1)}\;\;\frac{1}{q_2^2(z)/2+1-\cos(t_2)}
\end{equation}
Let us now focus on the precise action of the map (\ref{Covarianceflow}), by writing it in terms of its kernel $c^{(n)}_M(m_1',m_2')=c^{(n)}_M(m'_1-m'_2)$:
\begin{equation}
C^{(n+1)}_M (m'_1-m'_2)= 2^{-2D}\sum_{\delta',\delta''\in\{0,1\}^D}C^{(n)}_{2M}(2m'_1+\delta'-2m'_2+\delta'')
\end{equation}
and correspondingly for the Fourier transform for $D=2$
\begin{align} \label{flowdefinition}
&\hat{c}^{(n+1)}_M(l)=2^{-4}\sum_{\delta,\delta',\delta''\in\{0,1\}^2}\hat{c}^{(n)}_{2M}(l+\delta M) e^{ik_{2M}(l+\delta M)\cdot(\delta'-\delta'')}=\nonumber\\
&=\frac{1}{2^4} \sum_{\delta_1,\delta_2\in\{0,1\}}\hat{c}^{(n)}_{2M}(l_1+\delta_1 M, l_2 +\delta_2 M)\left(
e^{ik_{2M}(l_1+l_2+(\delta_1+\delta_2)M)}+\right.\nonumber\\
&\hspace{20pt} \left.+e^{-ik_{2M}(l_1+l_2+(\delta_1+\delta_2)M)}+e^{ik_{2M}(l_1-l_2+(\delta_1-\delta_2)M)}+e^{-ik_{2M}(l_1-l_2+(\delta_1-\delta_2)M)}\right.\nonumber\\
&\hspace{20pt} \left.+2e^{ik_{2M}(l_2+\delta_2M)}+2e^{-ik_{2M}(l_2+\delta_2M)}+2e^{ik_{2M}(l_1+\delta_1M)}+2e^{-ik_{2M}(l_1+\delta_1M)}+4\right)
\end{align}
where we wrote explicitly all 16 terms stemming from the different combinations of $(\delta'-\delta''$).
\begin{align} \label{decoupling}
&=\frac{1}{2^4} \sum_{\delta_1,\delta_2=0,1}\hat{c}^{(n)}_{2M}(l_1+\delta_1M,l_2+\delta_2M)\left( 4+ 4\cos(k_{2M}(l_2+\delta_2M))+4\cos(k_{2M}(l_1+\delta_1M))\right.+\nonumber\\
&\hspace{20pt} \left. +2\cos(k_{2M}(l_1+\delta_1M)+k_{2M}(l_2+\delta_2M))+2\cos(k_{2M}(l_1+\delta_1M)-k_{2M}(l_2+\delta_2M))\right)\nonumber\\
&=\frac{1}{2^2}\sum_{\delta_1,\delta_2=0,1}\hat{c}^{(n)}_{2M}(l_1+\delta_1M,l_2+\delta_2M)\times\nonumber\\
&\hspace{20pt} \left( 1+ \cos(k_{2M}(l_2+\delta_2M))+\cos(k_{2M}(l_1+\delta_1M))+\cos(k_{2M}(l_1+\delta_1M))\cos(k_{2M}(l_2+\delta_2M))
\right)\nonumber\\
&=\frac{1}{4}\sum_{\delta_1,\delta_2=0,1}\left(1+\cos(k_{2M}(l_1+\delta_1 M)\right)\left(1+\cos(k_{2M}(l_2+\delta_2 M)\right)\hat{c}^{(n)}_{2M}(l_1+\delta_1M,l_2+\delta_2M)
\end{align}
where we have used in the second step, that $2\cos(x)\cos(y)=\cos(x+y)+\cos(x-y)$. One realises that both directions completely decouple in the renormalisation 
transformation. Since the initial covariance factorises  under the contour integral over $\gamma$
this factorisation is preserved under the flow and implies that the flow of the covariance in each direction can be performed separately. \\

Following the arguments in section \ref{Hamiltonian renormalisation} one can determine the fixed point covariance stemming from (\ref{startingdecoupledcovariance_app}) for each direction separately and finds with $t_j=k_M l_j$ 
\begin{align}\label{finalIntegralof2DRen}
\hat{c}^*_M(k_0,l)&=-\left(\frac{\epsilon_M^4}{2\pi i}\right)\; \oint_\gamma\; dz\;
\prod_{j=1,2} \frac{1}{q_j^3}
\frac{q_j \text{ch}(q_j) - \text{sh}(q_j) +(\text{sh}(q_j)-q_j)\cos(t_j)}{\text{ch}(q_j)-\cos(t_j)}
\end{align}
Note that it is not necessary to pick a square root of the complex parameter 
$q^2_{1,2}(z)=\epsilon_M^2(\frac{k_0^2+p^2}{2}\mp z)$ since the integrand only depends 
on the square, despite its appearance (in other words, one may pick the branch arbitrarily,
the integrand does not depend on it). It follows that the integrand is a single valued 
function of $z$
which is holomorphic everywhere except for simple poles which we now determine, and which 
allow to compute the contour integral over $\gamma$ using the residue theorem.

There are no poles at $q_{1,2}^2=0$ since the functions $[q \text{ch}(q)-\text{sh}(q)]/q^3,\;
[\text{sh}(q)-q]/q^3$ are regular at $q=0$. Hence the only poles come from the 
zeroes of the function $\text{ch}(q)-\cos(t)$. Using $\text{ch}(iz)=\cos(z)$ and the periodicity of 
the cosine function we find $iq=\pm[t+2\pi N]$ with $N\in \mathbb{Z}$ or 
$q^2=-(t+2\pi N)^2$. In terms of $q_{j},\;j=1,2$ this means that 
\begin{align}
(k_0^2+p^2)/2\mp z=-\frac{(t_j+2\pi N)^2}{\epsilon_M^2}\;\;\Leftrightarrow\;\;
z=z_{N}=\pm [(k_0^2+p^2)/2+\frac{(t_j+2\pi N)^2}{\epsilon_M^2}]
\end{align}
It follows that the second factor involving $q_2$ has no poles in the domain bounded by $\gamma$
because they all lie on the negative real axis while those coming from the factor involving
$q_1$ lie all on the positive real axis. We will denote the latter by $z_N$. The 
poles coming from the zeroes of $\text{ch}(q_1)-\cos(t_1)$ are simple ones as one can check 
by expanding the hyperbolic cosine at $z_N$ in terms of $z-z_N$, in other words
\begin{align}
\lim_{z\to z_N}\frac{z-z_N}{\text{ch}(q_1(z))-\cos(t_1)}=
\lim_{z\to z_N}\frac{1}{\text{sh}(q_1(z)) q_1'(z)}=
\lim_{z\to z_N}\frac{2 q_1(z)}{\text{sh}(q_1(z)) [q_1^2(z)]'}=
-\frac{2q_1(z_N)}{\epsilon_M^2{\rm sh}(q_1(z_N))}
\end{align}
which is again independent of the choice of square root. We have used de l$^\prime$ Hospital's 
theorem in the second step. Note that $q_1(z_N)^2=-(t_1+2\pi N)^2$ and 
$q_2(z_N)^2=q^2+(t_1+2\pi N)^2:=q_N^2$ where $q^2:=\epsilon_M^2(k_0^2+p^2)$. 
 
Performing the integral using the residue theorem and using that 
$\text{ch}(q_1(z_N))=\cos(t_1)$ and $\text{sh}(q_1(z_N))=\pm i\sin(t_1)$ (sign cancels against 
similar choice for $q_1(z_N)$) we end up with 
\begin{align}\label{CovarianceResult2D}
\hat{c}_M^*(k_0,l)=& \epsilon^2_M\frac{[q_N {\rm ch}(q_N)-{\rm sh}(q_N)]+[{\rm sh}(q_N)-q_N]\cos(t_2)}{q_N^3[{\rm ch}(q_N)-\cos(t_2)]}+\nonumber\\
&-2\epsilon^2_M\underset{N\in\mathbb{Z}}{\sum}\frac{\cos(t_1)-1}{(2\pi N +t_1)^2}
\frac{1}{q_N^3} \;\frac{q_N \text{ch}(q_N)-\text{sh}(q_N)+(\text{sh}(q_N)-q_N)\cos(t_2)}
{\text{ch}(q_N)-\cos(t_2)}
\end{align}

\section{Derivation of the fixed point for RG-flow of prime $u=3$}
\label{sb}

The following explicit calculations are performed for the prime  $u=3$ as this illustrates what needs to be done also in the general case. The initial data of the RG-flow is given for $D=1$ with $t=k_Ml, q^2=\epsilon_M^2(k_0^2+p^2)$ by
\begin{equation}\label{3start_app}
\hat{c}^{(0)}_M(k_0,l)=\frac{\epsilon^2_M}{2(1-\cos(t))+q^2}
\end{equation}
In order to compute this flow, it is useful to recall the trigonometric addition theorems for 
the cosine function
\begin{equation}
\cos(x)+\cos(y)=2\cos\left(\frac{x+y}{2}\right)\cos\left(\frac{x-y}{2}\right),\hspace{10pt}
\cos(x)\cos(y)=\frac{1}{2}\left(\cos(x-y)+\cos(x+y)\right)
\end{equation}
to note the following explicit values
\begin{equation}
\cos(\frac{1}{6}2\pi)=\frac{1}{2}, \hspace{20pt} \cos(\frac{1}{3}2\pi)=-\frac{1}{2}, \hspace{20pt}\cos(\frac{2}{3}2\pi)=-\frac{1}{2}
\end{equation}
and to employ the  {\it Chebyshev recursive method}, which states that for $N\in\mathbb{N}$:
\begin{equation}
\cos(Nx)=2\cos(x)\cos((N-1)x)-\cos((N-2)x)
\end{equation}
which is an easy expansion into exponentials and finds application in what follows 
for the case $N=3$ and $x \rightarrow x/3$ to express $\cos(x)=2 \cos(x/3) \cos(2/3x)-\cos(x/3)$.

Equipped with these tools, we start to compute the RG-flow of $I_{M\rightarrow 3M}$ by finding a common denominator of the sum in (\ref{generaldecoupling}) assuming $\hat{c}^{(n)}$ could have been written in the form
\begin{equation} \label{3start2}
\hat{c}^{(n)}_M(k_0,l)=\frac{\epsilon^2_M}{q^3}\frac{b_n(q)+c_n(q)\cos(t)}{a_n(q)-\cos(t)}
\end{equation}
with suitably chosen functions $a_n,b_n,c_n$ of $q$ as we already know is true for (\ref{3start}). Then, the common denominator after one renormalisation step is generated 
by the linear combination of the of the three fractions in (\ref{generaldecoupling}) 
and is given by:
\begin{align*}
&\left[a_n(q)-\cos\left(\frac{t}{3}\right)\right]\left[a_n(q)-\left(\frac{t}{3}+M\frac{2\pi}{3M}\right)\right]\left[ a_n(q)-\cos\left(\frac{t}{3}+\frac{2}{3}2\pi\right)\right]=\\
&=a_n(q)^3-a_n(q)^2\left[\cos\left(\frac{t}{3}\right)+\cos\left(\frac{t}{3}+\frac{1}{3}2\pi\right)+\cos\left(\frac{t}{3}+\frac{2}{3}2\pi\right)\right]_A\\
&\hspace{15pt}+a_n(q)\left[\cos\left(\frac{t}{3}\right)\cos\left(\frac{t}{3}+\frac{1}{3}2\pi\right)+\cos\left(\frac{t}{3}\right)\cos\left(\frac{t}{3}+\frac{2}{3}2\pi\right)+\cos\left(\frac{t}{3}+\frac{1}{3}2\pi\right)\cos\left(\frac{t}{3}+\frac{2}{3}2\pi\right)\right]_B\\
&\hspace{15pt}-\left[\cos\left(\frac{t}{3}\right)\cos\left(\frac{t}{3}+\frac{1}{3}2\pi\right)\cos\left(\frac{t}{3}+\frac{2}{3}2\pi\right)\right]_C
\end{align*}
Each of the three prefactors in front of each power of $a_n(q)$ can now be evaluated precisely with the methods stated above. We obtain:
\begin{align}
\left[\cos\left(\frac{t}{3}\right)\right.&\left.+\cos\left(\frac{t}{3}+\frac{1}{3}2\pi\right)+\cos\left(\frac{t}{3}+\frac{2}{3}2\pi\right)\right]_A=0\\
\left[\cos\left(\frac{t}{3}\right)\right.&\left.\cos\left(\frac{t}{3}+\frac{1}{3}2\pi\right)+\cos\left(\frac{t}{3}\right)\cos\left(\frac{t}{3}+\frac{2}{3}2\pi\right)+\cos\left(\frac{t}{3}+\frac{1}{3}2\pi\right)\cos\left(\frac{t}{3}+\frac{2}{3}2\pi\right)\right]_B=-\frac{3}{4}\\
\left[\cos\left(\frac{t}{3}\right)\right.&\left. \cos\left(\frac{t}{3}+\frac{1}{3}2\pi\right)\cos\left(\frac{t}{3}+\frac{2}{3}2\pi\right)\right]_C=\frac{1}{4}\cos(t)
\end{align}
So we get for the denominator
\begin{equation}\label{fixpointeq3}
\frac{1}{4}\left(\left[4a_n(q)^3-3a_n(q)\right]-\cos(t)\right)
\end{equation}
which is again of the form that (\ref{3start2}) had. Moreover, we note that the $t$-independent part of (\ref{fixpointeq3}) is exactly the right hand side of the triple-angle formula for cos, cosh:
\begin{equation}
a(3q)=4a(q)^3-3a(q)
\end{equation}
hence with the choice of $a(q)=\text{ch}(q)$ we have found a fixed point for the flow induced onto the $a_n(q)$.\\
For the numerator, we continue in the same manner. After some pages of calculation, one finds 
it to be given by
\begin{align}
&(3+4\cos(t/3)+2\cos(2/3t))(b_n-c_n\cdot \cos(t/3))(a_n-\cos(t/3+2\pi/3))\times\nonumber\\
&\times(a_n-\cos(t/3+2/3\cdot2\pi))(3+4\cos(t/3+2\pi/3)+2\cos(2t/3+2/3\cdot 2\pi))(a_n-\cos(t/3))\times\nonumber\\
&\times(a_n-\cos(t/3))(a_n-\cos(t/3+2/3\cdot 2\pi))(3+4\cos(t/3+2/3\cdot 2\pi)+2\cos(2/3t+4/3 \cdot 2\pi))
\times\nonumber\\
&\times(b_n-c_n\cdot \cos(t/3+2/3 \cdot 2\pi))(a_n-\cos(t/3+2\pi/3)(b_n-c_n \cdot \cos(t/3+2\pi/3)))\nonumber\\
\label{3nominatorflow}
&=\ldots=
\left(-\frac{3}{4}+6a_n+9a_n^2\right)b_n-6a_n(1+a_n)c_n+\frac{3}{4}\left(4(1+a_n)b_n-(3+4a_n+4a_n^2)c_n\right)\cos(t)
\end{align}
Thus, also the numerator is cast again into an expression of the form $b_{n+1}+c_{n+1}\cos(t)$. We can already make use of the fact, that at the fixed point one has $a=\cosh(q)$. Making an educated guess and trying whether
\begin{equation}\label{educatedguess}
b=q \text{ch}(q)-\text{sh}(q),\hspace{30pt} c=\text{sh}(q)-q
\end{equation}
are solutions of the fixed point equation determined by (\ref{3nominatorflow}) one uses the triple-angle formula for the sine function
\begin{equation}
\sin(3x)=2\cos(x)\sin(2x)-\sin(x)=-2\cos(2x)\sin(x)-\sin(x)
\end{equation}
and obtains indeed by plugging (\ref{educatedguess}) into (\ref{3nominatorflow}):
\begin{align}
3(1+\cosh(q))(q \text{ch}(q))-\frac{1}{4}(3+4\text{ch}(q)+4\text{ch}(q)^2)(-\text{sh}(q)+q)=\frac{3}{4}\left[-3q+\text{sh}(3q)\right]
\end{align}
and
\begin{align}
\left(-\frac{3}{4}+6\text{ch}(q)+9\text{ch}(q)^2\right)( \text{ch}(q)-\text{sh}(q))-6\text{ch}(q)(1+\text{ch}(q))(-\text{sh}(q)+q)=\frac{3}{4}\left[3q \text{ch}(3q)-\text{sh}(3q)\right]
\end{align}
which presents indeed the triple angle formula, up to the common prefactor of $3/4$. The 
factor $1/4$
gets cancelled by the pre factor of $1/4$ in front of $a_n$ in (\ref{fixpointeq3}).  The factor $3$ 
cancels against a factor $3^{-1}$ which is obtained as follows: the map itself was defined with a prefactor $3^{-2}$, the factor $q^{-3}$ gives $3^3$ and the $\epsilon_M^2$ gives $3^{-2}$ which altogether gives a factor $3^{-1}$. Hence we have indeed found exactly the same fixed point 
under the $M\rightarrow3M$ coarse graining map as we found for the $M\to 2M$ 
coarse graining map!

\end{appendix}

}

\end{document}